\DeclarePairedDelimiter\ceil{\lceil}{\rceil}
\newcites{supp}{Appendix References}
\newcounter{claimcounter}
\newcites{supp}{Appendix References}
\def\BState{\State\hskip-\ALG@thistlm}
\newcounter{partcounter}
\newcommand{\myprefix}{}
\newcommand{\mylabel}[1]{\label{\myprefix#1}}
\newcommand{\myref}[1]{\ref{\myprefix#1}}
\newcommand{\myeqref}[1]{\eqref{\myprefix#1}}
\newcommand{\setprefix}[1]{\renewcommand{\myprefix}{#1}}
\newcommand{\mypartprefix}{%
  \ifnum\value{partcounter}>0 
    \Alph{partcounter}%
  \fi
}
\newcommand{\mypart}{
  \stepcounter{partcounter}  
  \setcounter{section}{0}
  \setcounter{theorem}{0}
  \setcounter{lemma}{0}
  \setcounter{figure}{0}
  \setcounter{algorithm}{0}
  \setcounter{definition}{0}
}
\numberwithin{equation}{section}
\begin{document}
\title{Multimodal Search on a Line\thanks{This is the full version of the paper which appeared in the Proceedings of SIROCCO 2025: The 32nd International Colloquium On Structural Information and Communications Complexity, 2-4 June
2025 Delphi, Greece.}}

\author{
Jared Coleman\inst{1}\textsuperscript(\Envelope\textsuperscript) 
\and
Dmitry Ivanov\inst{2}
\and
Evangelos Kranakis\inst{2}~\inst{5}
\and
Danny Krizanc\inst{3}
\and
Oscar Morales Ponce\inst{4}
}

\institute{
Loyola Marymount University, California, USA \\
\email{jared.coleman@lmu.edu}
\and
Carleton University, Ottawa, Ontario, Canada \\
\email{dimaivanov@cmail.carleton.ca} \\
\email{kranakis@scs.carleton.ca}
\and
Wesleyan University, Middletown CT, USA \\
\email{dkrizanc@wesleyan.edu}
\and
California State University, Long Beach, USA \\
\email{Oscar.MoralesPonce@csulb.edu}
\and
Research supported in part by NSERC Discovery grant.
}

\authorrunning{J. Coleman et al.}

\maketitle

\begin{abstract} 
    Inspired by the diverse set of technologies used in underground object detection and imaging, we introduce a novel multimodal linear search problem whereby a single searcher starts at the origin and must find a target that can only be detected when the searcher moves through its location using the correct of $p$ possible search modes.
    The target's location, its distance $d$ from the origin, and the correct search mode are all initially unknown to the searcher.
    
    We prove tight upper and lower bounds on the competitive ratio for this problem.
    Specifically, we show that when $p$ is odd, the optimal competitive ratio is given by $2p+3+\sqrt{8(p+1)}$, whereas when $p$ is even, the optimal competitive ratio is given by $c$: the unique solution to $(c-1)^4-4p(c+1)^2(c-p-1)=0$ in the interval $\left[2p+1+\sqrt{8p},\infty\right)$. This solution $c$ has the explicit bounds $2p+3+\sqrt{8(p-1)}\leq c\leq 2p+3+\sqrt{8p}$.
    The optimal algorithms we propose require the searcher to move infinitesimal distances and change directions infinitely many times within finite intervals.
    To better suit practical applications, we also propose an approximation algorithm with a competitive ratio of $c+\varepsilon$ (where $c$ is the optimal competitive ratio and $\varepsilon > 0$ is an arbitrarily small constant).
    This algorithm involves the searcher moving finite distances and changing directions a finite number of times within any finite interval. 
\vspace{0.25cm}

\noindent
{\bf Key words and phrases.} Autonomous agent, Competitive ratio, Linear Search, Oblivious Mobile Target, Searcher.

\end{abstract}


\mypart
\setprefix{}
\excludecomment{mainonly}
\includecomment{appendixonly}

\section{Introduction}
\mylabel{sec:intro}

Underground object detection and imaging is used in a wide variety of applications in archaeology, urban planning, environmental science, and more.
Many different technologies, each with their own unique capabilities, advantages, and disadvantages, are used to detect different types of materials in different environments.
Ground penetrating radars (GPR), for example, use radar pulses to image the subsurface and are used to locate pipes, cables, and other structures~\cite{uod:gpr}.
Electromagnetic induction (EMI) involves measuring electromagnetic fields to detect conductive materials (i.e., metal objects)~\cite{uod:emi}.
Magnetometers measure variations in the earth's magnetic field caused by ferrous objects and Electrical Resistivity Tomography (ERT) involves measuring electrical resistivity to detect underground soil and groundwater characteristics~\cite{uod:ert}.

Inspired by this diverse set of technologies, we study a novel variant of the classical linear search problem that features a new parameter $p$: the number of search modes available to the searcher, of which the target is only detectable in one.

\subsection{Model, Notation, and Terminology}

A point-like searcher starts at the origin of the real number line and can move with maximum speed $1$.
A stationary target is located an unknown distance $d \geq 1$ from the origin in an unknown direction.
The searcher has $p$ search modes and can only detect the target if the two become collocated while the searcher is in the correct search mode, which is not known in advance. 
The searcher can switch between modes and switch directions instantaneously, though it requires at least $L$ time to search an interval of length $L$ for every mode used.
Note that for $p=1$, this problem is equivalent to the standard linear search problem described in~\cite{beck1970linearsearch}.

For a given algorithm $\mathcal{A}$, let $P_{\mathcal{A}}(t)$ denote the position of the searcher executing algorithm $\mathcal{A}$ at time $t \geq 0$.
For simplicity and without loss of generality, we choose the positive direction to be that for which the searcher first travels a distance of $1$ from the origin. Let $E_{\mathcal{A}}(x,t)$ be the event that the signed coordinate $x$ has been explored in all $p$ modes by the searcher using algorithm $\mathcal{A}$ at or before time $t$. We define $E_{\mathcal{A}}(x,t)$ to always be true if $|x|<1$. 
Henceforth, we use the word ``explored'' to mean ``explored in all $p$ modes.''

Our goal is to minimize the \textit{competitive ratio} of our algorithms --- essentially how many times longer the searcher takes to find the target compared to an omniscient searcher that knows the target's location and the correct search mode ahead of time. We define:
\begin{equation*}
    T_{\mathcal{A}}(x) = \inf\left\{t\geq0 : E_{\mathcal{A}}(x,t)\right\},\quad CR_{\mathcal{A}}(x) = \frac{1}{|x|}T_{\mathcal{A}}(x),\quad CR_{\mathcal{A}} = \sup_{x} CR_{\mathcal{A}}(x) .
\end{equation*}
We consider only ``successful'' algorithms: those that yield finite exploration times for all coordinates $x$.

\subsection{Related work}

Stochastic linear search was independently considered by Bellman and Beck in the 1960's (cf. \cite{beck1964linear,bellman1963optimal}). Work in linear search dates back to the 1970s with Beck and Newman who introduced the linear search problem and proposed an algorithm with optimal competitive ratio $9$~\cite{beck1970linearsearch}.
Since then, many variants of the linear search problem have been studied and the competitive ratios of various algorithms have been analyzed~\cite{gal_search_games,baeza_yates,group_search,search_plane}. 

The multimodal linear search problem we study is a generalization of linear search (when the number of modes $p=1$, the two problems are equivalent).
Despite the vast literature in search theory, work on multimodal search is sparse.
A predator-prey model was introduced in~\cite{knoppien1985predators} where the predator has two modes of searching: a fast mode where the predator has a low probability of detecting the prey when encountered and a slow mode where the predator has a high probability of detecting the prey.
Similarly in~\cite{benichou2011intermittent}, the authors study problems where the searcher moves at a faster speed at which the searcher cannot detect the target and a slower speed at which the searcher can detect the target (with certainty).
The Beachcomber problem~\cite{beachcomber} extends this idea to a multi-agent setting where searchers must cooperate to find the target in as little time as possible.
Another key similarity between these studies and ours is that, depending on the strategy the searcher uses, it may not detect the target when passing through its location.
This is reminiscent of work in search with uncertain detection where the searcher may miss the target with some probability~\cite{gal_search_games,probabilistic_faults}.  

To the best of our knowledge, this work is the first to consider the multimodal linear search problem.

\subsection{Outline and results of the paper}

The rest of the paper is organized as follows.
In Section~\ref{sec:upper_bounds}, we propose an algorithm and prove an upper bound on its competitive ratio of $2p+3+\sqrt{8(p+1)}$ when $p$ is odd and $c$ (where $c$ is the unique solution to $(c-1)^4-4p(c+1)^2(c-p-1)=0$ in the interval $\left[2p+1+\sqrt{8p},\infty\right)$) when $p$ is even.
This solution $c$ has the explicit bounds $2p+3+\sqrt{8(p-1)}\leq c\leq 2p+3+\sqrt{8p}$.
Then, in Section~\ref{sec:lower_bounds}, we prove matching lower bounds on the competitive ratio for any algorithm, thus proving our upper bounds are tight.
The algorithms we propose in Section~\ref{sec:upper_bounds} involve the searcher moving  infinitesimal distances and changing directions infinitely many times within finite intervals. 
To better suit practical applications, we introduce in Section~\ref{sec:approx}  modified algorithms (for both the odd and even cases) with a competitive ratio of $c + \varepsilon$ (where $c$ is the optimal competitive ratio and $\varepsilon > 0$ is an arbitrarily small constant).
This modified algorithm involves the searcher moving finite distances and changing directions a finite number of times within any finite interval.
Numerical representations for the bounds proved in Section~\ref{sec:upper_bounds} 
are displayed in Table~\myref{tbl:crs}.
\begin{table}[ht]
\caption{Competitive ratio of multimodal search with $p$ modes, for $p=1,2,\ldots, 16$.}
\centering
\begin{tabular}{|c|c|c|c|}
    \hline
    \multicolumn{2}{|c|}{Odd $p$} & \multicolumn{2}{c|}{Even $p$} \\
    \hline
    $p$ & Competitive Ratio & $p$ & Competitive Ratio \\
    \hline
    1   &   9           &   2   &   10.27303    \\
    3   &   14.65685    &   4   &   16.08120    \\
    5   &   19.92820    &   6   &   21.43387    \\
    7   &   25          &   8   &   26.55911    \\
    9   &   29.94427    &   10  &   31.54214    \\
    11  &   34.79796    &   12  &   36.42569    \\
    13  &   39.58301    &   14  &   41.23468    \\
    15  &   44.31371    &   16  &   45.98516    \\
    \hline
\end{tabular}
\mylabel{tbl:crs}
\end{table}
Finally, in Section~\ref{sec:conclusion}, we summarize our contributions and discuss potential avenues for future research.
\begin{mainonly}
All proofs omitted due to space constraints can be found in the full version of the paper~\cite{full_paper}.
\end{mainonly}

\section{Upper Bounds}
\mylabel{sec:upper_bounds}

In this section, we propose algorithms  (one for odd $p$  and one for even $p$) for solving the multimodal linear search problem and prove upper bounds on their competitive ratio.

\subsection{Preliminaries}

The algorithms that we present follow a certain principle: select an interval and explore it with all $p$ modes by moving back and forth before proceeding.
We refer to these intervals as ``cells.'' The parity of $p$ matters because the searcher will finish searching a cell on the opposite side of the cell that it started for odd $p$ whereas it will finish at the same position that it started for even $p$. 
This can be beneficial or detrimental depending on where the searcher plans on going next. 
For odd $p$, ending on the opposite side facilitates searching multiple cells in a row but also means that the searcher will have to backtrack unproductively when returning to the origin.
Meanwhile, for even $p$, the searcher finishes exploring the last cell while also making its way back to the origin; however, any prior cells must be traversed an extra time to reach the start of the next cell. 
Figure~\ref{fig:thorough} depicts these advantages/disadvantages for the odd/even cases.
This reflects a fundamental disparity between the two cases that is  evident in our results. 

We formalize this in the \textsc{CellSearch} procedure, Algorithm \ref{alg:cellsearch}, below. Note that we always begin cell search at the point closest to the origin. 

\begin{algorithm}[H]
  \caption{\mylabel{alg:cellsearch}(Cell Search)}
  \begin{algorithmic}[1]
    \Procedure{CellSearch}{$x_{init}$, $\delta$, $p$}
      \Statex $x_{init}, \delta, p$: current position, positive distance, number of search modes 
      \State \textbf{for} $i \gets 0, ..., p-1$ \textbf{do} \Indent
      \State travel to current position plus $(-1)^i\delta\cdot \text{sign}(x_{init})$ while searching in mode $i$\EndIndent
      \State \textbf{endfor}
    \EndProcedure
  \end{algorithmic}
\end{algorithm}

Observe that splitting intervals into smaller cells enables points near the origin to be explored as soon as possible.
Imagine a procedure that searches an interval by dividing it into a number of cells with a fixed size (except for the last cell).
We call this procedure \textsc{DiscreteThoroughSearch}.

\begin{algorithm}[H]
  \caption{\mylabel{alg:discrete-thorough}(Discrete Thorough Search)}
  \begin{algorithmic}[1]
    \Procedure{DiscreteThoroughSearch}{$x_{init}$, $\delta$, $s$, $p$}
      \Statex $x_{init}$: current position
      \Statex $\delta$: positive distance
      \Statex $s$: cell size
      \Statex $p$: number of search modes
      \State $x \gets 0$
      \State \textbf{while} $x < \delta$ \textbf{do} \Indent
      \State $\Delta x \gets \min(s, \delta - x)$
      \State \textbf{call} \Call{CellSearch}{$x_{init} + x \cdot \text{sign}(x_{init})$, $\Delta x$, $p$}
      \State \textbf{if} $p$ is even \textbf{then} travel to $x + \Delta x$ \textbf{endif}
      \State $x \gets x + \Delta x$\EndIndent
      \State \textbf{endwhile}
    \EndProcedure
  \end{algorithmic}
\end{algorithm}

There is no theoretical lower limit on the cell size $s$.
In fact, as $s$ approaches $0$, \textsc{DiscreteThoroughSearch} essentially involves the agent exploring the interval in all search modes simultaneously at a slower speed $\frac{1}{p}$ if $p$ is odd and $\frac{1}{p+1}$ if $p$ is even.
We permit this behavior and simply call it \textsc{ThoroughSearch}.

\begin{algorithm}[H]
\caption{\mylabel{alg:thorough}(Thorough Search)}
    \begin{algorithmic}[1]
        \Procedure{ThoroughSearch}{$x_{init}$, $\delta$, $p$}
            \Statex $x_{init}$: current position
            \Statex $\delta$: positive distance
            \Statex $p$: number of search modes
            \State $u \gets$ $\frac{1}{p}$ \textbf{if} $p$ \textbf{is odd} \textbf{else} $\frac{1}{p+1}$
            \State travel to $x_{init} + \delta \cdot \text{sign}(x_{init})$ at speed $u$ while searching all modes simultaneously
        \EndProcedure
    \end{algorithmic}
\end{algorithm}

\begin{figure}[!htb]
    \begin{center}
      \includegraphics[width=0.3\textwidth]{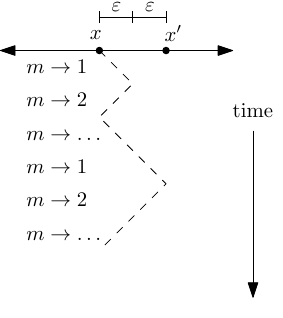}
      \hspace{1cm}
      \includegraphics[width=0.3\textwidth]{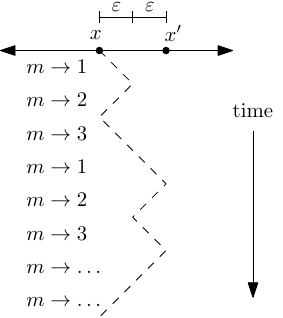}
    \end{center}
    \caption{The searcher explores the segment $[x,x^\prime]$ using \textsc{DiscreteThoroughSearch} with two $\varepsilon-$width cells. The left figure depicts the case where $p=2$ (even) and the right figure depicts the case where $p=3$ (odd). The value of $m$ represents the mode the searcher is operating in and $m \rightarrow \ldots$ is used to indicate that the searcher is traversing a cell that has already been fully explored.}
    \mylabel{fig:thorough}
\end{figure}




\subsection{Upper Bound for Odd p}

For the case of odd $p$, we apply \textsc{ThoroughSearch} on intervals of exponentially growing size, alternating between the positive and negative directions. The result is Algorithm \ref{alg:odd}.
\begin{algorithm}[H]
  \caption{\mylabel{alg:odd}(Multimodal search for odd $p$)}
  \begin{algorithmic}[1]
      \State \textbf{input:} number of modes $p$ and growth factor $a>1$.
      \State terminate upon detecting the target
      \State \textbf{for} $i \gets 0,1,2, ..., \infty$ \textbf{do} \Comment{each iteration $i$ represents a round} \mylabel{alg:odd:line:iter} \Indent
      \State $x \gets (-1)^ia^{i-1}$
      \State go to $x$
      \State \textbf{call} \Call{ThoroughSearch}{$x$, $\left(a^{i+1} - a^{i-1}\right)$, $p$}
      \State return to the origin \EndIndent
      \State \textbf{endfor}
  \end{algorithmic}
\end{algorithm}

\begin{theorem} \mylabel{thm:odd_upbound}
  Algorithm \myref{alg:odd} satisfies $CR_{\mathcal{A}} = 2p+3+\sqrt{8(p+1)}$ with the growth factor $a=1+\sqrt{\frac{2}{p+1}}$.
\end{theorem}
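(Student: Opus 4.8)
The plan is to follow the standard linear-search template: account for the time spent in each ``round'' of Algorithm~\ref{alg:odd}, locate the target positions whose exploration cost is worst relative to their distance, and thereby reduce $CR_{\mathcal A}$ to a one-variable optimization in the growth factor $a$, which the stated choice of $a$ will minimize.

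First I would compute the cost of one round. In round $i$ the searcher goes from the origin to $(-1)^i a^{i-1}$, runs \textsc{ThoroughSearch} at speed $1/p$ over the interval of length $a^{i+1}-a^{i-1}$ (thereby reaching $(-1)^i a^{i+1}$), and returns to the origin. Summing the three legs gives
\begin{equation*}
  t_i = a^{i-1}+p\left(a^{i+1}-a^{i-1}\right)+a^{i+1} = a^{i-1}\left[(p+1)a^2-(p-1)\right],
\end{equation*}
so the round times form a geometric sequence with ratio $a$. Note also that same-direction rounds $i$ and $i+2$ explore the abutting intervals $[a^{i-1},a^{i+1}]$ and $[a^{i+1},a^{i+3}]$ on one side of the origin, so the whole line with $|x|\ge 1$ is eventually explored.

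Next I would pin down the worst-case target. Because a point at exactly $(-1)^i a^{i+1}$ is explored at the end of round $i$ but a point just beyond it is ``missed'' there and only reached early in round $i+2$, the costly targets are those with $a^{i+1}<|x|\le a^{i+3}$. For such $x$ the exploration time equals $\sum_{j=0}^{i+1}t_j + a^{i+1}+p\bigl(|x|-a^{i+1}\bigr)$, whence $CR_{\mathcal A}(x)=C/|x|+p$ for a positive constant $C$ (positivity is a routine check). Since this is decreasing in $|x|$, the supremum over the round-$(i+2)$ interval is approached as $|x|\to (a^{i+1})^{+}$ and equals $R_i = a^{-(i+1)}\sum_{j=0}^{i+1}t_j + 1$. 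Evaluating the geometric sum gives
\begin{equation*}
  R_i=\frac{(p+1)a^2-(p-1)}{a-1}\bigl(1-a^{-(i+2)}\bigr)+1,
\end{equation*}
which is strictly increasing in $i$; hence $CR_{\mathcal A}=\sup_i R_i=\lim_{i\to\infty}R_i=\frac{(p+1)a^2-(p-1)}{a-1}+1$.

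Finally I would evaluate (and minimize) in $a$. Substituting $s=a-1>0$ collapses the expression to
\begin{equation*}
  CR_{\mathcal A}=\frac{2}{s}+(p+1)s+2p+3,
\end{equation*}
which by AM--GM is minimized at $s=\sqrt{2/(p+1)}$, i.e.\ $a=1+\sqrt{2/(p+1)}$, yielding the claimed value $2p+3+\sqrt{8(p+1)}$. I expect the main obstacle to be the worst-case analysis rather than the algebra: one must verify that no point explored during the early rounds, and no point lying deep inside a round's interval, beats the ``just-missed'' points, and one must handle the discontinuity whereby $x=a^{i+1}$ (explored cheaply at the end of round $i$) costs far less than $x=a^{i+1}+\varepsilon$ (explored only in round $i+2$). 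Establishing the monotonicity of $R_i$ and confirming that the supremum is exactly this limit is the crux; the remaining geometric-series and AM--GM computations are routine.
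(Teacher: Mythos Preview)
Your proposal is correct and follows essentially the same approach as the paper: compute the per-round cost, identify the ``just-missed'' boundary points as the worst case, sum the geometric series, and observe that the worst ratio is achieved in the limit $i\to\infty$. Your use of AM--GM to justify the choice of $a$ is a small bonus (the paper simply plugs in the stated $a$), and your caveat about early rounds is exactly the point the paper handles by noting that when the numerator $C$ is nonpositive the ratio is at most $p$, which is dominated by the limiting value.
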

\begin{proof}
  Algorithm \myref{alg:odd} plays out in rounds numbered $i\geq 0$ (line~\myref{alg:odd:line:iter}). 
  In each round $i$, the searcher explores the interval from $a^{i-1}$ to $a^{i+1}$ in direction $(-1)^i$ using \textsc{ThoroughSearch}. Note that for round $0$ only, some coordinates $|x|<1$ are searched superfluously, though this does not significantly affect the upcoming analysis. Executing \textsc{ThoroughSearch} takes $p(a^{i+1}-a^{i-1}) = pa^{i-1}(a^2-1)$ time and an additional $(a^{i+1}+a^{i-1}) = a^{i-1}(a^2+1)$ time is spent traveling back and forth from the origin. In total, completing a round $i$ takes $t_i = a^{i-1}\left((p+1)(a^2-1)+2\right)$ time. Completing all rounds from $0$ to $i$ takes $T_i = \frac{a^{i+1}-1}{a(a-1)}\left((p+1)(a^2-1)+2\right)$ time. Coordinates that are explored on round $i$ can be expressed in the form $(-1)^ia^{i-1}\left(1+x\right)$ where $0<x\leq a^2-1$. For rounds $0$ and $1$ only, $x$ may be exactly $0$; this does not affect the upcoming analysis. Each coordinate is explored at time $T_{i-1}+a^{i-1}\left(1+px\right)$.
  \begin{equation*}
    CR_{\mathcal{A}}(x) = \frac{T_{i-1}+a^{i-1}\left(1+px\right)}{a^{i-1}(1+x)} = p + \frac{T_{i-1}-(p-1)a^{i-1}}{a^{i-1}(1+x)}
  \end{equation*}
  If $T_{i-1} \leq (p-1)a^{i-1}$, then the ratio is maximized by selecting $x=a^2-1$ and is no greater than $p$. Otherwise, the ratio is maximized by selecting $x$ arbitrarily close to $0$ (or simply equal to $0$ in the case of the first two rounds). We now check the latter case for all rounds. What we will find is that the worst-case ratio is produced by arbitrarily late rounds and that it exceeds $p$, meaning that the former case does not produce the worst-case ratio.
  \begin{align*}
    \lim_{x\to 0^+}CR_{\mathcal{A}}(x) &= 1+\frac{T_{i-1}}{a^{i-1}} = 1+\frac{1-a^{-i}}{a-1}\left((p+1)(a^2-1)+2\right) \\
    \lim_{i\to\infty}\;\lim_{x\to 0^+}CR_{\mathcal{A}}(x) &= 1+\frac{1}{a-1}\left((p+1)(a^2-1)+2\right) = CR_{\mathcal{A}}
  \end{align*}
  Selecting the growth factor $a = 1 + \sqrt{\frac{2}{p+1}}$ yields
  \begin{equation*}
    CR_{\mathcal{A}} = 1+(p+1)\left(2+\sqrt{\frac{2}{p+1}}\right)+2\left(\sqrt{\frac{2}{p+1}}\right)^{-1} = 2p+3+\sqrt{8(p+1)}
  \end{equation*}
  Indeed, this ratio is greater than $p$. This proves Theorem \myref{thm:odd_upbound}.
\end{proof}

\subsection{Upper Bound for Even p}
In this section we give an algorithm with optimal competitive ratio for the case where $p$ is even. 
The optimal ratio is given implicitly as a root of the polynomial $D_p(c) = (c-1)^4 - 4p(c+1)^2(c-p-1)$. 
By Theorem \myref{thm:odd_lowbound} (presented in Section~\ref{sec:lower_bounds} with other lower bounds), the competitive ratio for the even case must be at least $2p+1+\sqrt{8p}$ since the same problem with $p-1$ modes instead of $p$ is strictly easier for the searcher.
For this reason, we need only consider the function $D_p(c)$ for $c\geq 2p+1+\sqrt{8p}$. We derive the behavior of $D_p(c)$ on this interval in Lemma \myref{lem:discrim_behavior}.
\begin{lemma} \mylabel{lem:discrim_behavior}
  For even $p$, the function $D_p(c) = (c-1)^4 - 4p(c+1)^2(c-p-1)$ transitions from negative to positive exactly once within the range $\left[2p+1+\sqrt{8p},\infty\right)$. 
\end{lemma}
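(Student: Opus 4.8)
The plan is to establish the three facts that together force a single negative-to-positive transition: that $D_p$ is negative at the left endpoint $c_0:=2p+1+\sqrt{8p}$, that $D_p(c)\to+\infty$ as $c\to\infty$, and that $D_p$ changes sign at most once on $[c_0,\infty)$. The first two are short computations; the real content is the uniqueness of the crossing, which I would obtain from a convexity argument after a change of variables.

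For the endpoint I would substitute $s:=\sqrt{8p}$ (so that $p=s^2/8$) into $D_p(c_0)$. Writing $c_0-1=2p+s$, $c_0+1=2p+2+s$, and $c_0-p-1=p+s$, and then clearing the common positive factor, the sign of $D_p(c_0)$ equals the sign of $s(s+4)^4-(s+8)(s^2+4s+8)^2$. Expanding both products, the degree-$5$, degree-$4$, and degree-$3$ terms cancel, and I expect this to collapse to $-64(s^2+5s+8)$, which is strictly negative for all $s>0$; hence $D_p(c_0)<0$. The limit at infinity is immediate: the leading $(c-1)^4$ term has degree four while $4p(c+1)^2(c-p-1)$ has degree three, so $D_p(c)\to+\infty$.

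For uniqueness I would exploit that on $[c_0,\infty)$ we have $c>p+1$, so $c-p-1>0$ and I can factor
\[
  D_p(c)=\bigl[(c-1)^2-2\sqrt{p}\,(c+1)\sqrt{c-p-1}\bigr]\bigl[(c-1)^2+2\sqrt{p}\,(c+1)\sqrt{c-p-1}\bigr].
\]
The second bracket is strictly positive on the interval, so the sign of $D_p$ equals the sign of the first bracket; in particular the endpoint computation above gives that the first bracket is negative at $c_0$. Setting $u:=\sqrt{c-p-1}$, an increasing bijection from $[c_0,\infty)$ onto $[u_0,\infty)$, turns the first bracket into $h(u):=(u-\sqrt{p})^2(u^2+p)-4\sqrt{p}\,u$. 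Differentiating twice I expect $h''(u)=4\bigl(3u^2-3\sqrt{p}\,u+p\bigr)$, whose discriminant $9p-12p=-3p$ is negative, so $h''>0$ everywhere and $h$ is strictly convex on the whole line. A strictly convex function that is negative at the left end of its domain and tends to $+\infty$ crosses zero exactly once, necessarily from negative to positive (before the crossing it stays at most its endpoint value, and after the crossing it increases to $+\infty$); transporting this back through the monotone substitution yields the claim for $D_p$.

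The main obstacle I anticipate is precisely the uniqueness step: a quartic-minus-cubic difference could a priori wiggle and cross several times, and ruling this out is exactly what convexity delivers. The crux is therefore to exhibit $D_p$ in a manifestly convex form, and the device that accomplishes this is the factorization together with the substitution $u=\sqrt{c-p-1}$, which also produces the clean factor $(u-\sqrt{p})^2(u^2+p)$ making $h''>0$ transparent. Without this reformulation one would instead have to verify $D_p''>0$ throughout $[c_0,\infty)$ by locating the roots of $D_p''$ and checking they lie to the left of $c_0$, a noticeably messier estimate. I note that evenness of $p$ does not enter the convexity computation itself; it is used only to justify restricting attention to $[c_0,\infty)$, since $c_0$ is the competitive ratio of the strictly easier $(p-1)$-mode odd problem and hence a valid lower bound in the even case.
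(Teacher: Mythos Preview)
Your argument is correct, but the route you take for the uniqueness step is different from the paper's and, contrary to your closing remark, the direct second-derivative check is actually the shorter of the two. The paper simply expands
\[
D_p''(c)=12c\bigl(c-2(p+1)\bigr)+2\bigl(4p(p-1)+6\bigr),
\]
observes that $c_0=2p+1+\sqrt{8p}>2(p+1)$ and that $4p(p-1)+6>0$ for $p\ge 2$, and concludes $D_p''>0$ on $[c_0,\infty)$ in two lines; together with $D_p(c_0)<0$ and the dominant $c^4$ term this gives the single sign change. Your approach---factoring $D_p$ as a difference of squares and passing to $u=\sqrt{c-p-1}$ so that the relevant factor becomes the quartic $h(u)=(u-\sqrt p)^2(u^2+p)-4\sqrt p\,u$ with $h''(u)=4(3u^2-3\sqrt p\,u+p)>0$ everywhere---is also valid and has the pleasant feature that convexity is global rather than merely on the interval of interest, but it is the more elaborate path. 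Your endpoint computation via $s=\sqrt{8p}$, collapsing to $-64(s^2+5s+8)$, is equivalent to the paper's direct substitution yielding $16p^{3/2}(-p\sqrt8-5\sqrt p-\sqrt8)$; these are the same expression up to the scaling $s^3/256$.
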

\begin{proof}
  Consider the polynomial $D_p(c)$ as well as its second and third derivatives.
  \begin{align*}
    D_p(c) &= (c-1)^4 - 4p(c+1)^2(c-p-1) \\
    &= c^4 - 4(p+1)c^3 + (4p(p-1)+6)c^2 + 4(2p-1)(p+1)c + (2p+1)^2 \\
    D_p^{(2)}(c) &= 12c^2 - 24(p+1)c + 2(4p(p-1)+6) \\
    &= 12c(c-2(p+1)) + 2(4p(p-1)+6) \\
    D_p^{(3)}(c) &= 24c - 24(p+1)
  \end{align*}
  We note that $D_p^{(3)}(c) > 0$ for all $c \geq 2p+1+\sqrt{8p} > p+1$. We also note that since $p\geq 2$, $D_p^{(2)}(c) > 0$ for all $c \geq 2p+1+\sqrt{8p} > 2(p+1)$. This shows that $D_p(c)$ is concave up for $c \geq 2p+1+\sqrt{8p}$. We now show that $D_p(2p+1+\sqrt{8p})<0$.
  \begin{align*}
    D_p(2p+1+\sqrt{8p}) &= (2p+\sqrt{8p})^4 - 4p(2p+2+\sqrt{8p})^2(p+\sqrt{8p}) \\
    &= 16p^{\frac{3}{2}}\left(\sqrt{p}(\sqrt{p}+\sqrt{2})^4 - (p+1+\sqrt{2p})^2(\sqrt{p}+\sqrt{8})\right) \\
    &= 16p^{\frac{3}{2}}\left(-p\sqrt{8}-5\sqrt{p}-\sqrt{8}\right) < 0
  \end{align*}
  Since $D_p(2p+1+\sqrt{8p})<0$ and $D_p(c)$ has the dominant term $c^4$, it must eventually become positive. Since it is concave up beyond $c=2p+1+\sqrt{8p}$, it cannot switch signs more than once. This proves Lemma \myref{lem:discrim_behavior}.
\end{proof}

Again, our algorithm alternates searching intervals of growing size in each direction but in this case, a fraction $r$ of the interval is searched using \textsc{ThoroughSearch} and the remainder is treated as single cell. The result is Algorithm \ref{alg:even}.
\begin{algorithm}[H]
  \caption{\mylabel{alg:even}(Multimodal search for even $p$)}
  \begin{algorithmic}[1]
      \State \textbf{input:} number of modes $p$, growth factor $a>1$, and a coefficient $0\leq r\leq1$.
      \State terminate at any point if collocated with target
      \State \textbf{for} $i \gets 0,1,2, ..., \infty$ \textbf{do} \Comment{each iteration $i$ represents a round} \mylabel{alg:even_add:line:iter} \Indent
      \State $x_{start} \gets (-1)^ia^{i-1}$
      \State $x_{end} \gets (-1)^ia^{i+1}$
      \State $x_{mid} \gets (1-r)x_{start} + rx_{end}$
      \State go to $x_{start}$
      \State \textbf{call} \Call{ThoroughSearch}{$x_{start}$, $\left|x_{mid}-x_{start}\right|$, $p$}
      \State \textbf{call} \Call{CellSearch}{$x_{mid}$, $\left|x_{end}-x_{mid}\right|$, $p$}
      \State return to the origin \EndIndent
      \State \textbf{endfor}
  \end{algorithmic}
\end{algorithm}

\begin{theorem} \mylabel{thm:even_upbound}
  Algorithm \myref{alg:even} achieves a competitive ratio $CR_{\mathcal{A}}=c^*$ where $c^*$ is the unique solution to $D_p(c)=0$ in the range $\left[2p+1+\sqrt{8p},\infty\right)$. The parameters $a$ and $r$ needed to do so are given in terms of the above value $c^*$ as follows:
  \begin{align*}
    a &= \frac{(c^*-1)^2}{2p(c^*+1)}, ~~~
    r = \frac{\sqrt{\left(p(a^2-1)+2\right)^2+8p(a^2-1)(a-1)}-2}{4(a^2-1)}-\frac{p}{4}.
  \end{align*}
\end{theorem}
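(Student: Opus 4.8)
The plan is to follow the template of the proof of Theorem~\myref{thm:odd_upbound}, adjusting for the fact that each round now has two phases: a \textsc{ThoroughSearch} phase covering the fraction $r$ of the round's interval nearest the origin, and a single \textsc{CellSearch} phase on the remaining fraction $1-r$.

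First I would tally the time of round $i$. Write $A:=a^2-1$. The searcher spends $a^{i-1}$ reaching $x_{start}$; then, since for even $p$ the procedure \textsc{ThoroughSearch} advances at speed $\tfrac1{p+1}$, it spends $(p+1)\,rA\,a^{i-1}$ in the \textsc{ThoroughSearch} phase; it spends $p(1-r)A\,a^{i-1}$ in \textsc{CellSearch} (which, $p$ being even, returns it to $x_{mid}$ with zero net displacement); and it spends $a^{i-1}(1+rA)$ returning to the origin from $x_{mid}$. Summing,
\[
  t_i = a^{i-1}\bigl(2 + A(p+2r)\bigr).
\]
Setting $K:=2+A(p+2r)$, the cumulative time $T_i$ is the same geometric sum as in the odd case with the bracket there replaced by $K$, so $T_{i-1}/a^{i-1}=\tfrac{1-a^{-i}}{a-1}K \to \tfrac{K}{a-1}$ as $i\to\infty$.

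Next I would split the coordinates explored in round $i$, written $|x|=a^{i-1}(1+x')$ with $0<x'\le A$, into the \textsc{ThoroughSearch} band $x'\in(0,rA]$ and the \textsc{CellSearch} band $x'\in(rA,A]$. A point in the first band is explored in all modes the instant the searcher passes it, at time $T_{i-1}+a^{i-1}\bigl(1+(p+1)x'\bigr)$; a point in the second band is explored in all modes only when the final, inward pass of \textsc{CellSearch} reaches it, at time $T_{i-1}+a^{i-1}\bigl(1+A(p+2r)-x'\bigr)$. Forming $CR_{\mathcal A}(x)$ in each band and letting $i\to\infty$: the first band's ratio is decreasing in $x'$ (its $x'$-derivative has sign $p-\tfrac{K}{a-1}<0$, using $c^*-1>p$, which holds since $c^*\ge 2p+1+\sqrt{8p}$), so its supremum is the limit $C_1:=1+\tfrac{K}{a-1}$ as $x'\to0^+$; the second band's ratio is also decreasing in $x'$, so its supremum is the right-hand limit at $rA$, namely $C_2:=\tfrac{C_1+Ap+rA}{1+rA}$. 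A short check, exactly as in the odd proof, rules out early rounds and the regime $T_{i-1}\le(p-1)a^{i-1}$ as candidates for the supremum, yielding $CR_{\mathcal A}=\max(C_1,C_2)$.

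It then remains to verify that the stated $a,r$ force $C_1=C_2=c^*$. Clearing denominators, $C_1=C_2$ is equivalent to $r(C_1-1)=p$, i.e.\ to the quadratic $2Ar^2+(Ap+2)r-p(a-1)=0$; its positive root is exactly the $r$ in the statement, so the chosen $r$ equalizes the two bands. With $C_1=C_2$, it suffices to check $C_1=c^*$, i.e.\ $(c^*-1)(a-1)=K$. Substituting the equalization relation $r=\tfrac{p}{c^*-1}$ and the stated $a=\tfrac{(c^*-1)^2}{2p(c^*+1)}$, and using $A=(a-1)(a+1)$, this collapses (after multiplying through by $c^*-1$) to $\bigl((c^*-1)^2-2p(c^*+1)\bigr)^2=8p\bigl((c^*)^2-1\bigr)$, which expands precisely to $D_p(c^*)=0$; since $c^*$ is the root of $D_p$ in $[2p+1+\sqrt{8p},\infty)$ supplied by Lemma~\myref{lem:discrim_behavior}, this holds and the proof closes.

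The main obstacle is the bookkeeping of when a coordinate first becomes explored in all modes. The subtle point is the discontinuity at $x_{mid}$: the point $x_{mid}$ itself is finished by \textsc{ThoroughSearch}, but coordinates just inside the cell are finished only at the very end of the last \textsc{CellSearch} pass, so the cell's worst case is the one-sided limit at $rA$ rather than an interior maximum --- getting this direction right is what makes $C_2$ (and hence the balance) come out correctly. The remaining difficulty is purely algebraic: steering the two balance conditions into the exact polynomial $D_p$, where the factorization $(c-1)^2+2(c-1)=(c+1)(c-1)$ is the identity that produces the factor $(c-p-1)$ and completes the reduction to $D_p(c^*)=0$.
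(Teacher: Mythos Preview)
Your proposal is correct and follows essentially the same approach as the paper: compute the per-round time $t_i=a^{i-1}(2+A(p+2r))$, identify the two worst-case ratios $C_1$ (from the \textsc{ThoroughSearch} band, at $x'\to 0^+$) and $C_2$ (from the \textsc{CellSearch} band, at $x'\to rA^+$), equalize them via the quadratic $2Ar^2+(Ap+2)r-p(a-1)=0$, and reduce to $D_p(c^*)=0$. The one notable difference is the endgame: the paper substitutes the balanced $r$ into $C_1$, sets $C_1=c$, and derives a quadratic in $a$ whose discriminant is $D_p(c)$, thereby showing the algorithm attains every $c$ with $D_p(c)\ge 0$ and hence the minimal such $c$ is $c^*$; you instead plug in the stated $a$ and $r$ and verify directly that $C_1=C_2=c^*$ via the identity $\bigl((c^*-1)^2-2p(c^*+1)\bigr)^2=8p((c^*)^2-1)$, which is a cleaner verification for the theorem as stated (though it does not by itself show $c^*$ is optimal among all $(a,r)$). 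Two minor points: the threshold you quote from the odd proof should be $T_{i-1}\le p\,a^{i-1}$ rather than $(p-1)a^{i-1}$ in the even case, and your monotonicity check in the first band invokes $c^*-1>p$ before $C_1=c^*$ is established --- both are harmless here since the limiting ratios are suprema over $i$ regardless, but worth tightening in a full write-up.
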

\begin{proof}
  Algorithm \myref{alg:even} plays out in rounds numbered $i\geq 0$ (line~\myref{alg:even_add:line:iter}). Consider an arbitrary round $i$ where the searcher goes in direction $(-1)^i$. The searcher begins by going to $a^{i-1}$ which is at the boundary of everything searched thus far. The searcher then executes \textsc{ThoroughSearch} to search in front of itself up to the point $m_i = a^{i-1} + r\left(a^{i+1}-a^{i-1}\right) = a^{i-1}\left(r(a^2-1)+1\right)$. This will be dubbed phase 1. The remaining interval from $m_i$ to $a^{i+1}$ is searched by moving back and forth using a single call to \textsc{CellSearch}. This will be dubbed phase 2. The searcher ends phase 2 at position $m_i$ and then returns to the origin to end the round. Note that for round $0$ only, some coordinates $|x|<1$ are searched superfluously, though this does not significantly affect the upcoming analysis.

  We define $t_i$ to be the time taken to complete round $i$ and $T_i = \sum_{j=0}^{i}t_j$. The above description implies that
  \begin{align*}
    t_i &= a^{i-1} + (p+1)(m_i-a^{i-1}) + p(a^{i+1}-m_i) + m_i\\
    &= 2a^{i-1} + p\left(a^{i+1}-a^{i-1}\right) + 2r\left(a^{i+1}-a^{i-1}\right)\\
    &= a^{i-1}\left(2 + (p+2r)\left(a^2-1\right)\right)~~ \text{and} \\
    T_i &= \frac{a^{i+1}-1}{a(a-1)}\left(2 + (p+2r)\left(a^2-1\right)\right) .
  \end{align*}
  We now examine the competitive ratio of coordinates searched during phase 1. These are coordinates of the form $d_1(i,x) = (-1)^ia^{i-1}(1+x)$ for $x\in\left(0,a^2-1\right]$. For rounds $0$ and $1$ only, $x$ may be exactly $0$; this does not affect the upcoming analysis.
  \begin{align*}
    T_{\mathcal{A}}(d_1(i,x)) &= T_{i-1} + a^{i-1} + a^{i-1}(p+1)x \\
    CR_{\mathcal{A}}(d_1(i,x)) &= 1 + \frac{T_{i-1} + a^{i-1}px}{a^{i-1}(1+x)} \\
    &= 1 + \frac{\frac{a^i-1}{a(a-1)}\left(2 + (p+2r)\left(a^2-1\right)\right) + a^{i-1}px}{a^{i-1}(1+x)} \\
    &= 1 + \frac{\frac{1-a^{-i}}{a-1}\left(2 + (p+2r)\left(a^2-1\right)\right) + px}{1+x} \\
    \lim_{i\to \infty}CR_{\mathcal{A}}(d_1(i,x)) &= 1 + \frac{\frac{2}{a-1} + (p+2r)(a+1) + px}{1+x}  \\
    &= 1+p + \frac{\frac{2}{a-1} + (p+2r)a + 2r}{1+x}
  \end{align*}
  Since the numerator is positive, this value is maximized by minimizing $x$. We select $x$ arbitrarily close to $0$ (or simply equal to $0$ in the case of the first two rounds). 
  \begin{equation}
    \lim_{i\to \infty}\lim_{x\to 0^+}CR_{\mathcal{A}}(d_1(i,x)) = 1 + \frac{2}{a-1} + (p+2r)(a+1) . \mylabel{eq:upbound_cr1}
  \end{equation}
  Note that this quantity is increasing with respect to $r$.
  Next, we consider coordinates searched during phase 2: $d_2(i,x) = (-1)^ix$ for $x\in\left(m_i,a^{i+1}\right]$.
  \begin{align*}
    T_{\mathcal{A}}(d_2(i,x)) &= T_{i} - x, ~~~
    CR_{\mathcal{A}}(d_2(i,x)) = \frac{T_{i}}{x} - 1 .
  \end{align*}
  This value is maximized as $x$ approaches $m_i$.
  \begin{align*}
    \lim_{x\to m_i^+}CR_{\mathcal{A}}(d_2(i,x)) &= \frac{T_i}{m_i} - 1 
    = \frac{\frac{a^{i+1}-1}{a(a-1)}\left(2 + (p+2r)\left(a^2-1\right)\right)}{a^{i-1}\left(r(a^2-1)+1\right)} - 1 \\
    &= \frac{\frac{a-a^{-i}}{a-1}\left(2 + (p+2r)\left(a^2-1\right)\right)}{r(a^2-1)+1} - 1 \\
    \lim_{i\to \infty}\lim_{x\to m_i^+}CR_{\mathcal{A}}(d_2(i,x)) &= \frac{\frac{a}{a-1}\left(2 + (p+2r)\left(a^2-1\right)\right)}{r(a^2-1)+1} - 1 \\
    &= \frac{\frac{a}{a-1}p\left(a^2-1\right)}{r(a^2-1)+1} + \frac{2a}{a-1} - 1
  \end{align*}
  Note that this quantity decreases with respect to $r$.
  Since the phase 1 ratio increases with respect to $r$ while the phase 2 ratio decreases, the maximum of the two is minimized when they are equal.
  \begin{align*}
    1 + \frac{2}{a-1} + (p+2r)(a+1) &= \frac{\frac{a}{a-1}p\left(a^2-1\right)}{r(a^2-1)+1} + \frac{2a}{a-1} - 1 \\
    (p+2r)(a+1) &= \frac{\frac{a}{a-1}p\left(a^2-1\right)}{r(a^2-1)+1} \\
    (p+2r)\left(r(a^2-1)+1\right) &= ap \\
    2(a^2-1)r^2 + \left(p(a^2-1)+2\right)r + p(1-a) &= 0
  \end{align*}
  \begin{equation*}
    r = \frac{\sqrt{\left(p(a^2-1)+2\right)^2+8p(a^2-1)(a-1)}-2}{4(a^2-1)}-\frac{p}{4}
  \end{equation*}
  Substituting this into Equation \myeqref{eq:upbound_cr1} yields
  \begin{equation*}
    1 + \frac{p(a+1)}{2}+\frac{2+\sqrt{\left(p(a^2-1)+2\right)^2+8p(a^2-1)(a-1)}}{2(a-1)} .
  \end{equation*}
  This function denotes the best competitive ratio achievable by Algorithm \myref{alg:even} for a given growth ratio $a$. Minimizing this function with respect to $a$ yields the optimal parameters for Algorithm \myref{alg:even}. Suppose that we desire some competitive ratio $c$:
  \begin{align*}
    c = 1 + \frac{p(a+1)}{2}+\frac{2+\sqrt{\left(p(a^2-1)+2\right)^2+8p(a^2-1)(a-1)}}{2(a-1)} \\
    2(c-1)(a-1) - p(a^2-1) - 2 = \sqrt{\left(p(a^2-1)+2\right)^2+8p(a^2-1)(a-1)}
  \end{align*}
  For simplicity, let us define $f(a)=p(a^2-1)+2$.
  \begin{align*}
    2(c-1)(a-1) - f(a) &= \sqrt{f(a)^2+8p(a+1)(a-1)^2} \\
    4(c-1)^2(a-1)^2 - 4(c-1)(a-1)f(a) + f(a)^2 &= f(a)^2+8p(a^2-1)(a-1) \\
    (c-1)^2(a-1) - (c-1)\left(p(a^2-1)+2\right) &= 2p(a^2-1) \\
    (c-1)^2(a-1) - p(c-1)(a^2-1) - 2(c-1) &= 2p(a^2-1) \\
    (c-1)^2(a-1) - 2(c-1) &= p(c+1)(a^2-1) \\
    (c-1)^2a - (c-1)^2 - 2(c-1) &= p(c+1)a^2 - p(c+1) \\
    p(c+1)a^2 - (c-1)^2a + (c+1)\left(c-1-p\right) &= 0
  \end{align*}
  Note that this quadratic has discriminant $D_p(c)$. We conclude that Algorithm \myref{alg:even} can attain competitive ratio $c$ so long as $D_p(c) \geq 0$. According to Lemma \myref{lem:discrim_behavior}, the lowest attainable ratio is $c^*$, the unique solution to $D_p(c) = 0$ in the range $\left[2p+1+\sqrt{8p},\infty\right)$. The parameters to achieve this ratio are given by
  \begin{align*}
    a &= \frac{(c^*-1)^2}{2p(c^*+1)}, ~~~
    r = \frac{\sqrt{\left(p(a^2-1)+2\right)^2+8p(a^2-1)(a-1)}-2}{4(a^2-1)}-\frac{p}{4} .
  \end{align*}
  This concludes the proof of Theorem~\myref{thm:even_upbound}.
\end{proof}

\section{Lower Bounds}
\mylabel{sec:lower_bounds}

\subsection{Preliminaries}
When considering all possible algorithms, we permit infinitely precise motion such as that performed by \textsc{ThoroughSearch}, but in a more general form. To determine whether a pattern of motion is valid, we imagine a scenario where motion is restricted to small, discrete steps of size $\varepsilon$ in which the searcher uses one search mode and moves a distance $\varepsilon$ with speed $1$ in either direction. A pattern of motion is valid if it can be approximated arbitrarily well in this restricted case as $\varepsilon$ tends to $0$.

We define several concepts that are integral to proving lower bounds for the competitive ratio of all algorithms. For a given algorithm $\mathcal{A}$, let
\begin{align*}
  I^\sigma_{\mathcal{A}}(t) &= (-1)^\sigma \sup \left\{x : x>0, \forall 0\leq x'\leq x\; E_{\mathcal{A}}\left((-1)^\sigma x',t\right)\right\} .
\end{align*}
For a given $t$, the interval $\left(I^1_{\mathcal{A}}(t), I^0_{\mathcal{A}}(t)\right)$, hereafter referred to as the ``central island,'' is the greatest-length contiguous interval around the origin such that all coordinates within the interval have been explored by algorithm $\mathcal{A}$ by time $t$. 
We are most interested in the length of the island; hence, the inclusion or exclusion of endpoints does not matter. 
Also, note that any integer values of $\sigma$ may be used and only their parity is relevant.
Even values represent the positive side of the line and odd values the negative side.

We define a function that extracts sequences of key values from an arbitrary algorithm. The existence of these key values and their properties will be used to set lower bounds on the performance of all algorithms. The function $X(\mathcal{A})$ takes an algorithm $\mathcal{A}$ and produces a tuple of sequences $\left(\left(x_i\right)_{i=-2}^{\infty}, \left(t_i\right)_{i=0}^{\infty}\right)$ where the former is a sequence of signed coordinates and the latter is a sequence of time values.

To obtain the two sequences, we first define $t_0$ to be the time at which algorithm $\mathcal{A}$ first reaches $x=1$. We also define $x_{-2}=1$, $x_{-1} = -1$. Then, for each $i\geq 0$ in order of increasing $i$, we select $t_{i+1}$ and $x_i$ such that
  $t_{i+1} = \sup \left\{t : I^{i+1}_{\mathcal{A}}(t) = I^{i+1}_{\mathcal{A}}(t_i)\right\} ,~
  x_i = \lim_{t\to t_{i+1}^-} I^i_{\mathcal{A}}(t)$.
We refer to the time interval $(t_i, t_{i+1}]$ as ``period $i$'' (not to be confused with the prior notion of rounds, though there do exist parallels between the two). During period $i$, the central island of explored points is grown in direction $(-1)^i$. Period $i$ ends and period $i+1$ begins precisely when the searcher begins growing the island in the opposite direction. $x_i$ is the signed endpoint of the central island in direction $(-1)^i$ at its furthest extent during period $i$.

The sequence $\left(x_i\right)_{i=-2}^{\infty}$ satisfies a number of key properties. By definition, $|x_{i}| > |x_{i-2}|$ for all $i\geq0$. During each period, the searcher must at least traverse the interval $(-1,1)$ in time $2$ to start the next period. This means that a finite number of periods occur in any finite time. Furthermore, since we only consider ``successful'' algorithms, it must be true that $\lim_{i\to \infty}|x_i| = \infty$ as all finite distances have a finite exploration time.

Finally, we prove Lemma \myref{lem:general_becknewman} which will be used in lower bound proofs.
This lemma shows the impossibility of the existence of infinite sequences whose terms satisfy a certain inequality and is a modification of part of Lemma 2 from \cite{beck1970linearsearch} that gives a more general result.

\begin{lemma} \mylabel{lem:general_becknewman}
For $a>0$, $b>0$, and $a^2-4b<0$, no positive, infinite sequence $\left(y_i\right)_{i=0}^{\infty}$ can satisfy $y_{i+2}\leq ay_{i+1}-by_i$ for all $i$.
\end{lemma}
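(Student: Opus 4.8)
The plan is to exploit the oscillatory nature of the linear recurrence behind the inequality. Since $a,b>0$ and $a^2-4b<0$, the characteristic equation $r^2-ar+b=0$ has a conjugate pair of complex roots of modulus $\sqrt{b}$, so the solutions of the associated equality recurrence oscillate and cannot remain positive; the goal is to push this oscillation onto the inequality. First I would normalize away $b$ by substituting $y_i=b^{i/2}z_i$, which turns the hypothesis into $z_{i+2}\le \frac{a}{\sqrt{b}}\,z_{i+1}-z_i$ with every $z_i>0$ (because $b^{i/2}>0$). Since $0<\frac{a}{\sqrt{b}}<2$, I can write $\frac{a}{\sqrt{b}}=2\cos\theta$ for a unique $\theta\in\left(0,\frac{\pi}{2}\right)$, and the hypothesis becomes $\Delta_i:=z_{i+2}-2\cos\theta\,z_{i+1}+z_i\le 0$ for all $i\ge 0$.

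The key device is a telescoping weighted sum. I would consider $S_N=\sum_{i=0}^N \sin\!\big((i+1)\theta\big)\,\Delta_i$ and simplify it using the identity $\sin((k-1)\theta)+\sin((k+1)\theta)=2\cos\theta\sin(k\theta)$. After re-indexing the three pieces of $\Delta_i$, every interior coefficient of $z_k$ cancels, and collapsing the leftover $\sin(N\theta)-2\cos\theta\sin((N+1)\theta)$ to $-\sin((N+2)\theta)$ leaves only boundary terms:
\begin{equation*}
  S_N=\sin\theta\,z_0-\sin\!\big((N+2)\theta\big)\,z_{N+1}+\sin\!\big((N+1)\theta\big)\,z_{N+2}.
\end{equation*}

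Finally I would choose the truncation point to make the two sides of this identity contradict each other. Taking $N=\lceil \pi/\theta\rceil-2$, which is nonnegative because $\theta<\frac{\pi}{2}$ forces $\pi/\theta>2$, yields $(N+1)\theta<\pi\le(N+2)\theta$. Then for every $0\le i\le N$ the weight $\sin((i+1)\theta)$ is strictly positive while $\Delta_i\le 0$, so $S_N\le 0$. On the other hand, on the right-hand side $\sin\theta\,z_0>0$, the term $-\sin((N+2)\theta)\,z_{N+1}\ge 0$ since $\sin((N+2)\theta)\le 0$, and $\sin((N+1)\theta)\,z_{N+2}>0$ since $0<(N+1)\theta<\pi$; hence $S_N>0$. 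This contradiction shows that no such positive sequence can exist, proving the lemma.

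The main obstacle is discovering the correct weights and truncation rather than any hard estimate: one must guess that weighting $\Delta_i$ by $\sin((i+1)\theta)$ produces a telescoping sum, and then locate the unique window $N$ at which the sine weights are still positive while the boundary sines $\sin((N+1)\theta)$ and $\sin((N+2)\theta)$ straddle $\pi$. The only delicate bookkeeping is the sign of the sine terms at the two boundaries, together with the degenerate case where $\pi/\theta$ is an integer, which is still covered since there $\sin((N+2)\theta)=0\le 0$ suffices.
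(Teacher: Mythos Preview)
Your proof is correct, and it takes a genuinely different route from the paper's. The paper substitutes $z_i=(a/(2b))^i y_i$ and shows that the second difference satisfies $z_{i+2}-2z_{i+1}+z_i\le\frac{a^2-4b}{a^2}\,z_{i+2}$, a strictly negative multiple of a positive quantity; it then argues analytically that a positive sequence whose second differences are bounded above by a negative multiple of its terms cannot exist (positivity plus concavity forces the sequence to be nondecreasing, but that lower-bounds the terms and hence forces the slopes to decrease without bound). Your approach instead normalizes by $b^{i/2}$, writes $a/\sqrt b=2\cos\theta$, and produces a finite telescoping identity $S_N=\sin\theta\,z_0-\sin((N+2)\theta)\,z_{N+1}+\sin((N+1)\theta)\,z_{N+2}$ that yields a contradiction at the explicit index $N=\lceil\pi/\theta\rceil-2$. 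Your argument is arguably sharper: it pins the contradiction to a single finite step and makes the link to the oscillation of the characteristic roots explicit, whereas the paper's version is more elementary in flavor, using only discrete concavity and no trigonometric identities.
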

\begin{appendixonly}
  \begin{proof}
  Let us define another sequence $\left(z_i\right)_{i=0}^{\infty}$ given by $z_i = \left(\frac{a}{2b}\right)^iy_i$. We can establish a bound on the second difference of this sequence.
  \begin{align*}
    & \left(\frac{2b}{a}\right)^{i+2}\left(z_{i+2}-2z_{i+1}+z_i\right) 
    = y_{i+2}-2\left(\frac{2b}{a}\right)y_{i+1}+\left(\frac{2b}{a}\right)^2y_i \\
    &= y_{i+2}-\left(\frac{4b}{a^2}\right)\left(ay_{i+1}-by_i\right) 
    \leq ay_{i+1}-by_i-\left(\frac{4b}{a^2}\right)\left(ay_{i+1}-by_i\right) \\
    &= -\left(\frac{4b-a^2}{a^2}\right)\left(ay_{i+1}-by_i\right) 
    \leq -\left(\frac{4b-a^2}{a^2}\right)y_{i+2} .
  \end{align*}
  Therefore,
  \begin{equation}
    z_{i+2}-2z_{i+1}+z_i \leq \left(\frac{a^2-4b}{a^2}\right)z_{i+2} . \mylabel{eq:general_becknewman_max_concav}
  \end{equation}
  We see that the second difference of the sequence $\left(z_i\right)_{i=0}^{\infty}$ at index $i+2$ is at most the value of the sequence at index $i+2$ multiplied by a negative quantity $\frac{a^2-4b}{a^2}$. This applies for all $i\geq 0$.

  A sequence that is both positive and concave down can never decrease. If it ever did, then the slope could never increase again and the sequence would eventually cross the $y$ axis, contradicting its positivity. This means that the second difference of $\left(z_i\right)_{i=0}^{\infty}$ must tend to $0$, lest the slope decrease to arbitrarily low values. According to Equation \myeqref{eq:general_becknewman_max_concav}, this can only happen if $z_i$ itself tends to $0$. However, being a positive sequence, it must decrease at some point in order to tend to $0$. Hence, the existence of a sequence $\left(z_i\right)_{i=0}^{\infty}$ satisfying the above conditions is contradictory, proving Lemma \myref{lem:general_becknewman}.
  \end{proof}
\end{appendixonly}

\subsection{Lower Bound for Odd p}
\begin{theorem} \mylabel{thm:odd_lowbound}
    For odd $p$, no algorithm $\mathcal{A}$ satisfies $CR_{\mathcal{A}} < 2p+3+\sqrt{8(p+1)}$.
\end{theorem}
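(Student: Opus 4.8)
The plan is to mirror the Beck--Newman turning-point argument, adapted to $p$-fold exploration, and then invoke Lemma~\myref{lem:general_becknewman}. Suppose toward a contradiction that $\mathcal{A}$ achieves $CR_{\mathcal{A}}=c<c^{\ast}$, where $c^{\ast}=2p+3+\sqrt{8(p+1)}$. Using the preliminaries I would extract the frontier coordinates $x_i$, set $a_i=|x_i|$, and consider the partial sums $S_i=\sum_{j=0}^{i}a_j$, which form a positive, strictly increasing sequence tending to $\infty$ (since $a_i\to\infty$). The goal is to show $(S_i)$ satisfies a two-term recurrence that Lemma~\myref{lem:general_becknewman} forbids whenever $c<c^{\ast}$.

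The technical core is a forced-cost estimate on the time of each period. Let $\tau_i$ be the instant at which the central island first reaches its extreme $x_i$ during period $i$; at that instant the searcher must itself be located at $x_i$, since completing the final mode at the coordinate $x_i$ requires the searcher to be there. I would then establish $\tau_i-\tau_{i-1}\ge a_{i-1}+p\,a_i-(p-1)a_{i-2}$ from two ingredients: (i) to resume growth on side $(-1)^i$ the searcher must travel from $x_{i-1}$ across the already-explored island to the near frontier at distance $a_{i-2}$, a displacement of at least $a_{i-1}+a_{i-2}$ over territory where no new exploration is possible; and (ii) it must then cover the genuinely new interval $[a_{i-2},a_i]$ in all $p$ modes, which in the discrete $\varepsilon$-step model costs time at least $p(a_i-a_{i-2})$ and, because $p$ is \emph{odd}, leaves the searcher at the far end $x_i$, consistent with its pinned position at time $\tau_i$.

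The worst-case targets are the first unexplored points lying just beyond a frontier $x_n$. Such a point is explored only during period $n+2$ (the island attains exactly $x_n$ on its side in period $n$ and does not advance there during period $n+1$), so it is reached no earlier than $\tau_{n+1}+a_{n+1}+a_n$, whence the hypothesis $CR_{\mathcal{A}}\le c$ yields $\tau_{n+1}+a_{n+1}\le(c-1)a_n$ for every $n$. Summing the forced-cost estimate to bound $\tau_{n+1}$ from below and substituting the identity $a_i=S_i-S_{i-1}$ collapses the telescoping sums and gives, up to an additive constant absorbed by shifting the sequence,
\[
  (p+1)\,S_{n+1}\;\le\;(c-1)\,S_n-(c-p)\,S_{n-1},\qquad\text{i.e.}\qquad S_{n+1}\;\le\;\tfrac{c-1}{p+1}\,S_n-\tfrac{c-p}{p+1}\,S_{n-1}.
\]

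Writing $\alpha=\frac{c-1}{p+1}$ and $\beta=\frac{c-p}{p+1}$, one has $\beta>0$ (as $c>p$), and $\alpha^2-4\beta<0$ precisely when $(c-1)^2<4(p+1)(c-p)$; the quadratic $c^2-(4p+6)c+(2p+1)^2$ has roots $2p+3\pm\sqrt{8(p+1)}$, so this inequality holds for all $c$ strictly below $c^{\ast}$ (and above the smaller root, which we may assume by replacing $c$ with an admissible value in that range, legitimate since enlarging $c$ only weakens the competitive-ratio bound). Lemma~\myref{lem:general_becknewman} applied to the positive sequence $(S_i)$ then produces a contradiction, so $CR_{\mathcal{A}}\ge c^{\ast}$. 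The step I expect to be the crux is the forced-cost estimate of the second paragraph: making the ``transit plus $p$-fold coverage'' accounting rigorous for an arbitrary, not necessarily origin-returning, algorithm without double counting, pinning the searcher at $x_{i-1}$ at time $\tau_{i-1}$ through the $I^{\sigma}_{\mathcal{A}}$ definitions, and passing to the discrete $\varepsilon$-model so that a fresh interval genuinely costs a factor $p$. The parity of $p$ is exactly what fixes the constants: an odd number of sweeps ends on the far side, forcing the full return across the island and producing $c^{\ast}$ rather than the smaller even-case value.
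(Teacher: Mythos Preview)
Your overall route coincides with the paper's---both reduce to the recurrence $(p+1)S_{n+1}\le(c-1)S_n-(c-p)S_{n-1}$ on the partial sums and invoke Lemma~\myref{lem:general_becknewman} with $a=\frac{c-1}{p+1}$, $b=\frac{c-p}{p+1}$---but the per-period forced-cost estimate has a genuine gap. Two of its ingredients fail. First, the searcher need not be at $x_i$ at time $\tau_i$: the island can \emph{jump} to $x_i$ when the searcher completes the last missing mode at an interior point that connects to territory already fully explored farther out. With $p=3$, the motion $1\to5$ (mode~$1$), $5\to3$ (mode~$2$), $3\to7$ (mode~$3$), $7\to1$ (mode~$2$), $1\to3$ (mode~$3$) makes the island leap from $[-1,1]$ to $[-1,5]$ the instant the searcher arrives at $3$. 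Second, the inequality $\tau_i-\tau_{i-1}\ge a_{i-1}+pa_i-(p-1)a_{i-2}$ can fail outright, since the $p$-fold coverage of the new interval need not lie inside period $i$: the searcher may pre-load $p-1$ modes on an interval beyond $x_0$ during period $0$ (the island stays put there since one mode is missing) and then finish with a single pass in period $2$, so that $\tau_2-\tau_1=a_1+a_2$, well below your $a_1+3a_2-2a_0$. The same positional issue undermines the bound $\tau_{n+1}+a_{n+1}+a_n$ for the exploration time of a point just past $x_n$.

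The paper circumvents all of this by a global accounting on $t_{i+2}$, the start of period $i+2$, which directly lower-bounds the exploration time of any point beyond $x_i$. Total elapsed time splits into productive time $p(|x_{i+1}|+|x_i|-2)$ plus unproductive time summed over all periods $j\le i+1$, each contributing at least $|x_j|+|x_{j-1}|$. The odd-$p$ parity enters as a statement about cumulative traversals: the interval between $x_{j-2}$ and $x_j$ is fully explored by the end of period $j$ with the searcher back on its origin side, so that interval has been crossed an even number of times and hence at least $p+1$, contributing one retread beyond the $p$ productive crossings. Because these retread intervals are spatially disjoint across $j$ and the base retreads are temporally disjoint, there is no double counting, and the same recurrence follows without ever pinning the searcher's location.
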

\begin{proof}
  Consider an arbitrary algorithm $\mathcal{A}$. We obtain the sequences $\left(x_i\right)_{i=-2}^{\infty}$ and $\left(t_i\right)_{i=0}^{\infty}$ from $X(\mathcal{A})$ as described previously.  
  
  We start by establishing a lower bound on $t_{i+2}$: the time that it takes to complete all of the periods up to and including $i+1$. By the end of period $i+1$, the central island reaches length $|x_{i+1}|+|x_i|$. At $t=0$, the island had length $2$, meaning that at least $p\left(|x_{i+1}|+|x_i|-2\right)$ time has been spent expanding the island across all of the periods before and including $i+1$. One may call this the ``total productive time'' as any further time is spent unproductively retreading old ground. We note that every period $j$ includes a minimum amount of unproductive time. At the start of period $j$, the central island has length $|x_{j-1}|+|x_{j-2}|$, which must be retreaded during period $j$ to start the next period. We call this the ``base retreading time'' of period $j$. The nature of odd $p$ guarantees further retreading. The searcher cannot explore an interval using all $p$ modes and return to the origin without retreading the interval in question at least once because that implies crossing something an odd number of times and returning to the starting side. The interval between $x_{j-2}$ and $x_j$ becomes fully explored by the end of period $j$, meaning that its length must have been retreaded at least once prior to the end of period $j$. The time taken to do so is not part of the base retreading time of period $j$ or any prior period as the latter involves retreading ground on the origin-facing side of $x_{j-2}$ while the former does not. We label this the ``odd retreading time of period $j$.'' In summary, each period $0\leq j\leq i+1$ contributes at least $|x_{j}|+|x_{j-1}|$ of unproductive time. In combination with the total productive time, this yields a lower bound on the time needed to complete all periods up to and including $i+1$. Let us use the shorthand $\hat{x}_i = \sum_{j=0}^{i}|x_j|$.
  \begin{align*}
    t_{i+2} &\geq p\left(|x_{i+1}|+|x_i|\right) + \sum_{j=0}^{i+1}\left(|x_j|+|x_{j-1}|\right) \\
    &= (p+1)\left(|x_{i+1}|+|x_i|\right) + |x_i| + 2\hat{x}_{i-1} + |x_{-1}| \\
    &\geq (p+1)\left(|x_{i+1}|+|x_i|\right) + |x_i| + 2\hat{x}_{i-1} .
  \end{align*}
  We will use this result to evaluate the performance of $\mathcal{A}$. Consider a period $i\geq 0$. Coordinates of the form $x_i + (-1)^i\varepsilon$ remain unexplored for certain, arbitrarily small $\varepsilon>0$. Were this not true, such points would all be explored for sufficiently small values of $\varepsilon$, suggesting that the contiguous island extends further than $x_i$; however, this contradicts the definition of $x_i$. Such points are explored no sooner than period $i+2$. Generously, we assume that all coordinates explored during period $i+2$ are explored at the start of the period: $t_{i+2}$. 
  \begin{align*}
    CR_{\mathcal{A}}\left(x_i + (-1)^i\varepsilon\right) &\geq \frac{t_{i+2}}{|x_i|+\varepsilon} .
  \end{align*}
  Since we can select arbitrarily small values of $\varepsilon$, it is in fact necessary that
  \begin{align*}
    CR_{\mathcal{A}}\left(x_i + (-1)^i\varepsilon\right) &\geq \frac{t_{i+2}}{|x_i|} = 2p + 3 +\frac{(p+1)\left(|x_{i+1}|-|x_i|\right) + 2\hat{x}_{i-1}}{|x_i|} .
  \end{align*}
  We now assume, for the sake of contradiction, that for some $c<\sqrt8$, for all $i\geq 0$, 
  \begin{align*}
    CR_{\mathcal{A}}\left(x_i + (-1)^i\varepsilon\right) &\leq 2p + 3 + c\sqrt{p+1} \\
    \frac{(p+1)\left(|x_{i+1}|-|x_i|\right) + 2\hat{x}_{i-1}}{|x_i|} &\leq c\sqrt{p+1} \\
    (p+1)\left(|x_{i+1}|-|x_i|\right) + 2\hat{x}_{i-1} &\leq c\sqrt{p+1}|x_i| \\
    (p+1)\left(\hat{x}_{i+1}-2\hat{x}_i+\hat{x}_{i-1}\right) + 2\hat{x}_{i-1} &\leq c\sqrt{p+1}\left(\hat{x}_i-\hat{x}_{i-1}\right) \\
    \hat{x}_{i+1}-2\hat{x}_i+\hat{x}_{i-1} + \frac{2}{p+1}\hat{x}_{i-1} &\leq \frac{c}{\sqrt{p+1}}\hat{x}_i-\frac{c}{\sqrt{p+1}}\hat{x}_{i-1} \\
    \hat{x}_{i+1} \leq \left(\frac{c}{\sqrt{p+1}}+2\right)\hat{x}_i &- \left(\frac{2}{p+1}+\frac{c}{\sqrt{p+1}}+1\right)\hat{x}_{i-1}
  \end{align*}
  We invoke Lemma \myref{lem:general_becknewman} with $a=\frac{c}{\sqrt{p+1}}+2$ and $b=\frac{2}{p+1}+\frac{c}{\sqrt{p+1}}+1$. We verify that $a^2-4b<0$:
  \begin{align*}
    0 &> \left(\frac{c}{\sqrt{p+1}}+2\right)^2-4\left(\frac{2}{p+1}+\frac{c}{\sqrt{p+1}}+1\right) \\
    0 &> \frac{c^2}{p+1}+\frac{4c}{\sqrt{p+1}}+4-\frac{8}{p+1}-\frac{4c}{\sqrt{p+1}}-4 \Rightarrow 
    0 > \frac{c^2-8}{p+1} \Rightarrow
    c < \sqrt8
  \end{align*}
  By Lemma \myref{lem:general_becknewman}, a contradiction is reached as the positive sequence $\left(\hat{x}_i\right)_{i=0}^{\infty}$ cannot remain positive. We conclude that no algorithm $\mathcal{A}$ can satisfy $CR_{\mathcal{A}} < 2p+3+\sqrt{8(p+1)}$. This proves Theorem \myref{thm:odd_lowbound}.
\end{proof}

\subsection{Lower Bound for Even p}
\begin{theorem} \mylabel{thm:even_lowbound}
  For even $p$, no algorithm $\mathcal{A}$ can satisfy $CR_{\mathcal{A}}<c^*$ where $c^*$ is the unique solution to $D_p(c)=0$ in the range $\left[2p+1+\sqrt{8p},\infty\right)$.
\end{theorem}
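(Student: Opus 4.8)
The plan is to mirror the structure of the odd-case lower bound (Theorem \myref{thm:odd_lowbound}) but with a finer time accounting that captures the inefficiency specific to even $p$. As before, I would extract the sequences $\left(x_i\right)_{i=-2}^{\infty}$ and $\left(t_i\right)_{i=0}^{\infty}$ from $X(\mathcal{A})$, assume for contradiction that $CR_{\mathcal{A}} < c^*$, derive a homogeneous linear recurrence of the form $y_{i+2}\le a y_{i+1} - b y_i$ in a suitable positive sequence (e.g. a partial-sum sequence like $\hat{x}_i$), and invoke Lemma \myref{lem:general_becknewman}. The payoff is that the discriminant test $a^2-4b<0$ should be algebraically equivalent to $D_p(c)<0$; combined with Lemma \myref{lem:discrim_behavior}, which guarantees $D_p(c)<0$ precisely for $c\in\left[2p+1+\sqrt{8p},c^*\right)$, this yields the contradiction. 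We may restrict to $c\ge 2p+1+\sqrt{8p}$ because, as noted before Lemma \myref{lem:discrim_behavior}, ignoring one mode reduces the problem to the strictly easier odd case with $p-1$ modes, so Theorem \myref{thm:odd_lowbound} already forces $CR_{\mathcal{A}}\ge 2p+1+\sqrt{8p}$.

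The crucial new difficulty is that the accounting used in the odd case is too weak here. For odd $p$, every period pays a forced ``odd retreading'' penalty, since an odd number of crossings of a cell strands the searcher on the far side; for even $p$ no such penalty exists, and the productive-plus-base-retreading count recovers, at best, the weaker floor $2p+1+\sqrt{8p}$ already implied by mode reduction rather than the true value $c^*$. The extra inefficiency of even $p$ is not a global retreading cost but a local tension inside each period: the searcher cannot simultaneously make near-origin and far points cheap. I would capture this by introducing, for each period $i$, an intermediate split coordinate $m_i$ with $|x_{i-2}|<|m_i|\le|x_i|$, defined intrinsically from $\mathcal{A}$ (for instance via the first time during period $i$ that the boundary reaches $|x_i|$, separating the densely explored near portion from the single outbound sweep covering the far portion). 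This mirrors the role of $m_i$ in the proof of Theorem \myref{thm:even_upbound}.

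Next I would produce two competing lower bounds on $CR_{\mathcal{A}}$, one per family of worst-case points, paralleling the phase-1 and phase-2 analyses of the upper bound. Points of the form $x_i+(-1)^i\varepsilon$ just beyond the island endpoint remain unexplored until at least period $i+2$, giving $CR_{\mathcal{A}}\ge t_{i+2}/|x_i|$ as in the odd case (Family 1). Points just inside the split, on the origin-facing side of $m_i$, must be completed near the end of their own period, because finishing all $p$ even modes on the far portion while returning origin-ward forces these points to be explored last; this yields a second bound of roughly $CR_{\mathcal{A}}\ge t_{i+1}/|m_i|-1$ (Family 2), and the two bounds move oppositely in the split position $m_i$. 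Taking the worse of the two and optimizing over the adversary's freedom in choosing the split --- the discrete analogue of setting the phase-1 and phase-2 ratios equal in Theorem \myref{thm:even_upbound} --- collapses the two inequalities into a single recurrence whose coefficients satisfy $a^2-4b<0$ exactly when $c<c^*$.

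The main obstacle is Family 2: defining $m_i$ and proving the ``explored last'' lower bound for an \emph{arbitrary} valid algorithm, not merely one with the clean two-phase structure of Algorithm \myref{alg:even}. Concretely, I must rule out that some clever interleaving of modes and directions within a period beats the balanced tradeoff --- that is, show any covering of the newly explored interval in all $p$ modes that returns origin-ward incurs, at some intrinsic split point, the same combined penalty our algorithm does. Translating this local, per-period tension into a single homogeneous recurrence in the right auxiliary sequence, and verifying that its associated discriminant (via Lemma \myref{lem:general_becknewman}) is exactly $D_p(c)$, is where the real work lies.
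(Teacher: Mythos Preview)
Your high-level plan matches the paper's: two families of bad points per period, a recurrence, and Lemma~\myref{lem:general_becknewman} with discriminant precisely $D_p(c)$. But the proposal has a genuine gap at the point you yourself flag, and two further technical pieces you have not anticipated.

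\textbf{The split point.} Your $m_i$ is ill-specified: you first say it is ``defined intrinsically from $\mathcal{A}$'' and then speak of ``optimizing over the adversary's freedom in choosing the split''. These cannot both hold, and neither version works for an arbitrary algorithm. The paper's resolution is to let the algorithm tell you where to look: define $T_i$ as the first time all of $(|x_{i-2}|,|x_i|]$ is fully explored and set $y_i = P_{\mathcal{A}}(T_i)$, the position of the \emph{last}-explored coordinate of period $i$. Then Family~2 is simply the point $y_i$ itself, with $CR_2(i)=T_i/|y_i|$ by definition---no ``explored last'' claim to prove. The even-$p$ inefficiency enters through a clean parity argument bounding $T_i$ from below: at time $T_i$ the searcher sits at $y_i$, so the segment between $x_{i-2}$ and $y_i$ has been crossed an odd number of times, which with $p$ even forces at least one retread of length $|y_i|-|x_{i-2}|$. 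This gives $T_i \ge t_i + p(|x_i|-|x_{i-2}|) + |y_i|-|x_{i-2}|$ and hence $CR_2(i)\ge 1+\bigl(t_i-|x_{i-2}|+p(|x_i|-|x_{i-2}|)\bigr)/|y_i|$, which is the correct Family~2 bound (your $t_{i+1}/|m_i|-1$ is not).

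\textbf{The auxiliary sequence.} A recurrence in $\hat{x}_i$ will not close, because the $y_i$'s are extra unknowns. The paper first uses $CR_2(i)\le c$ together with the recursion $t_{i+1}\ge t_i + 2|y_i| + |x_{i-1}|-|x_{i-2}| + p(|x_i|-|x_{i-2}|)$ to bound $\sum_{j\le i}|y_j|$ below by a geometric-weighted sum of the $|x_j|$ with weight $w=(c+1)/(c-1)$; only after this elimination does $CR_1(i)\le c$ yield a clean three-term recurrence, and it is in $z_i=\sum_{j}w^{i-j}|x_j|$ rather than $\hat{x}_i$. The coefficients come out as $a=(c-1)^2/(p(c+1))$ and $b=(c-1-p)/p$, and $a^2-4b=D_p(c)/(p^2(c+1)^2)$ as desired.

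\textbf{A growth lemma.} The elimination step produces a stray constant term (of order $2p+1-t_0$) weighted by growing powers of $w$, which would spoil the homogeneous recurrence. The paper handles this with an auxiliary lemma showing $|x_i|\ge \frac{2}{c}\bigl(\frac{c}{c-2}\bigr)^i$ for any algorithm of ratio $c$; since $\frac{c}{c-2}>w$, finitely many early terms absorb the constant, after which the recurrence is homogeneous and Lemma~\myref{lem:general_becknewman} applies. You have not anticipated this, and without it the argument does not close.
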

\begin{appendixonly}
\begin{proof}
  We begin by proving Lemma \myref{lem:min_x_growth}.
  \begin{lemma} \mylabel{lem:min_x_growth}
    Consider an algorithm $\mathcal{A}$ and its corresponding sequence $\left(x_i\right)_{i=-2}^{\infty}$ obtained from $X(\mathcal{A})$.
    If $\mathcal{A}$ has competitive ratio at most $c$, then it must satisfy $|x_i| \geq \frac{2}{c}\left(\frac{c}{c-2}\right)^i$ for all $i$.
  \end{lemma}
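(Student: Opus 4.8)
The plan is to pit the competitive ratio (which forces points to be explored quickly) against the physical time the searcher must spend, and to read off a geometric recursion for $|x_i|$. The consequence of $CR_{\mathcal{A}}\le c$ that I use is that every coordinate $x$ is explored by time $c|x|$. As in the proof of Theorem~\ref{thm:odd_lowbound}, the coordinate $x_i+(-1)^i\varepsilon$ lies just past the furthest reach of the central island in direction $(-1)^i$ during period $i$, so it cannot be explored before period $i+2$; thus $t_{i+2}\le c\,|x_i+(-1)^i\varepsilon|$ for all small $\varepsilon>0$, and letting $\varepsilon\to0^+$ yields the key inequality $|x_i|\ge t_{i+2}/c$.

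Next I lower-bound $t_{i+2}$, the time to finish periods $0$ through $i+1$, by the work that cannot be avoided for even $p$ (where, unlike the odd case, the searcher returns to the near side of each cell, so there is no odd-retreading term). With $\hat{x}_i=\sum_{j=0}^{i}|x_j|$, the searcher must explore the island (of length $|x_{i+1}|+|x_i|$ at the end of period $i+1$) in all $p$ modes and must cross the current island once per period to reverse direction, so
\begin{equation*}
  t_{i+2}\ \ge\ p\big(|x_{i+1}|+|x_i|-2\big)+\sum_{j=0}^{i+1}\big(|x_{j-1}|+|x_{j-2}|\big),
\end{equation*}
and the retreading sum equals $\hat{x}_i+\hat{x}_{i-1}+3$. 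Combining this with $|x_i|\ge t_{i+2}/c$, discarding the nonnegative forward term $p\,|x_{i+1}|$, and using $\hat{x}_i=\hat{x}_{i-1}+|x_i|$ collapses everything to the purely backward recursion
\begin{equation*}
  (c-p-1)\,|x_i|\ \ge\ 2\hat{x}_{i-1}-2p+3 .
\end{equation*}

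From here I would prove $|x_i|\ge\frac{2}{c}\big(\frac{c}{c-2}\big)^i$ by showing that the partial sums grow geometrically. Ignoring the additive constant, the recursion gives $\hat{x}_i=\hat{x}_{i-1}+|x_i|\ge\frac{c-p+1}{c-p-1}\,\hat{x}_{i-1}$, and $\frac{c-p+1}{c-p-1}\ge\frac{c}{c-2}$ is equivalent to $(c-p+1)(c-2)\ge c(c-p-1)$, which simplifies to $2(p-1)\ge0$ and so holds for every even $p\ge2$. Granting $\hat{x}_i\ge\frac{c}{c-2}\hat{x}_{i-1}$ and $\hat{x}_0=|x_0|>1$, one gets $|x_i|=\hat{x}_i-\hat{x}_{i-1}\ge\frac{2}{c-2}\hat{x}_{i-1}\ge\frac{2}{c-2}\big(\frac{c}{c-2}\big)^{i-1}=\frac{2}{c}\big(\frac{c}{c-2}\big)^i$, as required.

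The crux — and the step I expect to fight with — is twofold. First, one genuinely needs the relation at $t_{i+2}$ rather than $t_{i+1}$: the latter yields a growth ratio strictly below $\frac{c}{c-2}$, and likewise base retreading alone (without the $p$-mode exploration term) only gives $\frac{c+1}{c-1}$; it is the interplay of the two-period delay and the $p$-fold exploration that produces exactly $\frac{c}{c-2}$, certified by the identity $2(p-1)\ge0$. Second, the additive constant $-2p+3$ must be absorbed: for the finitely many intermediate $i$ at which $\frac{2}{c}\big(\frac{c}{c-2}\big)^i$ is still near $1$, the geometric estimate of $\hat{x}_{i-1}$ is not yet dominant, and I would instead use the crude count $\hat{x}_{i-1}\ge i$ (each $|x_j|>1$) together with $x_{-2}=x_{-1}=\pm1$ to dispatch the base cases before the geometric regime takes over.
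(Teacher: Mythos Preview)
Your approach is salvageable but it is both more complicated than the paper's and, as written, incomplete at exactly the point you flag. Two remarks.

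\textbf{The ``crux'' you identify is self-inflicted.} You bring in the $p$-mode exploration term $p(|x_{i+1}|+|x_i|-2)$, and the $-2$ in that term is what produces the negative additive constant $-(2p-3)$ that you then have to absorb. Your sketch for absorbing it (``use $\hat{x}_{i-1}\ge i$ for small $i$, geometric regime for large $i$'') is plausible but not carried out, and it is more delicate than you suggest: once you subtract the fixed point $\frac{2p-3}{2}$ from $\hat{x}_i$, the base case can be negative if $|x_0|<p-\tfrac{3}{2}$, so one really has to stitch together the linear and geometric regimes with explicit thresholds. It can be done, but it is genuine work you have not yet supplied.

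\textbf{The paper sidesteps all of this by \emph{not} using $p$ at all.} Your intuition that ``base retreading alone only gives $\frac{c+1}{c-1}$ and one needs the $p$-fold exploration to reach $\frac{c}{c-2}$'' is the wrong diagnosis. The paper's argument is: after the searcher has visited $x_0,x_1,\ldots,x_i$ and next returns to the origin, the elapsed time is at least the length of the zigzag path, namely $2\sum_{j=0}^{i}|x_j|=2\hat{x}_i$. At that moment every coordinate of absolute value at most $\frac{1}{c}\cdot 2\hat{x}_i$ must already be explored, while the central island on side $(-1)^i$ still ends at $x_i$ (period $i+2$ has not started). Hence
\[
  |x_i|\ \ge\ \frac{2}{c}\,\hat{x}_i
  \quad\Longrightarrow\quad
  \Bigl(1-\tfrac{2}{c}\Bigr)\hat{x}_i\ \ge\ \hat{x}_{i-1}
  \quad\Longrightarrow\quad
  \hat{x}_i\ \ge\ \tfrac{c}{c-2}\,\hat{x}_{i-1}\ \ge\ \Bigl(\tfrac{c}{c-2}\Bigr)^{i},
\]
and then $|x_i|\ge\frac{2}{c}\hat{x}_i\ge\frac{2}{c}\bigl(\frac{c}{c-2}\bigr)^{i}$. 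The zigzag bound $2\hat{x}_i$ already includes the current term $|x_i|$ on the right, which is exactly what turns the ratio into $\frac{c}{c-2}$ with no additive constant to fight. Your decomposition into ``productive $p$-time plus base retreading'' loses this: your retreading term for period $j$ is $|x_{j-1}|+|x_{j-2}|$, one index behind the paper's $2|x_j|$ contribution, and the shortfall is precisely what forces you to import the $p$-term and its initial-island constant.

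So: drop the $p$-mode accounting entirely for this lemma, use the raw zigzag travel time, and the proof becomes three lines.
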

  \begin{proof}
    After going to $x_i$ during period $i$, algorithm $\mathcal{A}$ will eventually return to the origin. Once it does so, it will have travelled for no less than $2\sum_{j=0}^{i}|x_j|$ total time (a bare minimum). At this moment, all coordinates within distance $\frac{2}{c}\sum_{j=0}^{i}|x_j|$ of the origin must have been visited so that competitive ratio $c$ is not exceeded. This means that $|x_i|$ must extend at least this far. For all $i\geq 0$,
    \begin{align}
      |x_i| &\geq \frac{2}{c}\sum_{j=0}^{i}|x_j| \mylabel{eq:min_growth_eq1}\\
      \left(1-\frac{2}{c}\right)\sum_{j=0}^{i}|x_j| &\geq \sum_{j=0}^{i-1}|x_j| \nonumber\\
      \sum_{j=0}^{i}|x_j| &\geq \left(\frac{c}{c-2}\right)\sum_{j=0}^{i-1}|x_j| \nonumber\\
      \sum_{j=0}^{i}|x_j| &\geq \left(\frac{c}{c-2}\right)^i|x_0| \geq \left(\frac{c}{c-2}\right)^i \mylabel{eq:min_growth_eq2}
    \end{align}
    Applying Equation \myeqref{eq:min_growth_eq2} to Equation \myeqref{eq:min_growth_eq1} yields that for all $i$,
    \begin{align*}
      |x_i| &\geq \frac{2}{c}\left(\frac{c}{c-2}\right)^i .
    \end{align*} 
    This proves Lemma \myref{lem:min_x_growth}.
  \end{proof}

  Consider an arbitrary algorithm $\mathcal{A}$. We obtain the sequences $\left(x_i\right)_{i=-2}^{\infty}$ and $\left(t_i\right)_{i=0}^{\infty}$ from $X(\mathcal{A})$. For each period $i$, we will select two specific coordinates and consider the performance of algorithm $\mathcal{A}$ at or near those coordinates.
  
  Our first coordinate of interest is $x_{i-2}$. For arbitrarily small values of $\varepsilon>0$, coordinate $x_{i-2}+\varepsilon$ is unexplored at the the start of period $i$. Hence, its exploration time is no less than $t_i+\varepsilon$. The associated competitive ratio is $\frac{t_i}{|x_{i-2}|+\varepsilon}$. We define $CR_1(i) = \frac{t_i}{|x_{i-2}|}$. We know that for every period $i$, a point is explored with an associated competitive ratio exceeding or getting arbitrarily close to $CR_1(i)$.

  Selecting the second key point is more involved. For each $i\geq 0$, we define $T_i$ and $y_i$ as follows:
  \begin{align*}
    T_i &= \inf\left\{t : \forall (-1)^ix\in (|x_{i-2}|,|x_i|] E_{\mathcal{A}}(x,t)\right\} ~~\text{and} \\
    y_i &= \mathcal{A}(T_i) .
  \end{align*}
  $T_i$ is the time at which algorithm $\mathcal{A}$ finishes exploring all coordinates between $x_{i-2}$ and $x_i$ (all those associated with period $i$) whereas $y_i$ is the position of those last-explored points. For any $\varepsilon>0$, there exist points within and only within distance $\varepsilon$ of $y_i$ that are unexplored at time $T_i-\varepsilon$ but are explored at time $T_i$. Such points have an associated competitive ratio no less than $\frac{T_i-\varepsilon}{|y_i|-\varepsilon}$, which may be brought arbitrarily close to $\frac{T_i}{|y_i|}$ with a sufficiently small $\varepsilon$.

  We now establish a lower bound on $\frac{T_i}{|y_i|}$. At time $T_i$, the searcher is at coordinate $y_i$. Of all the ground explored during period $i$, $|y_i|-|x_{i-2}|$ lies on the origin-facing side of $y_i$ while $|x_i|-|y_i|$ lies on the origin-opposed side of $y_i$. Exploring this ground requires $p\left(|x_i|-|x_{i-2}|\right)$ time. Additionally, ground on the origin-facing side of $y_i$ must have been retreaded at least once since the searcher crossed it an odd number of times while $p$ is even. Thus,
  \begin{equation*}
    T_i \geq t_i + p\left(|x_i|-|x_{i-2}|\right) + |y_i| - |x_{i-2}| .
  \end{equation*}
  We define $CR_2(i) = 1 + \frac{t_i-|x_{i-2}| + p\left(|x_i|-|x_{i-2}|\right)}{|y_i|}$. We know that for every period $i$, a point is explored with an associated competitive ratio exceeding or getting arbitrarily close to $CR_2(i)$. Note that after time $T_i$, the searcher must travel at least $|y_i| + |x_{i-1}|$ distance to start the next period. This gives us a lower bound on $t_{i+1}$: for all $i\geq 0$,
  \begin{align}
    t_{i+1} &\geq T_i + |y_i| + |x_{i-1}| \nonumber\\
    &\geq t_i + 2|y_i| + |x_{i-1}|-|x_{i-2}| + p\left(|x_i|-|x_{i-2}|\right) .\nonumber
  \end{align}
  This can be applied recursively. We obtain that for all $i\geq 0$,
  \begin{align}
    t_{i+1} &\geq t_0 + \sum_{j=0}^{i}\left(2|y_j|+ |x_{j-1}|-|x_{j-2}|+p\left(|x_j|-|x_{j-2}|\right)\right) \nonumber\\
    &= t_0 + 2\sum_{j=0}^{i}|y_j| + |x_{i-1}|-|x_{-2}| + p\left(|x_i|+|x_{i-1}|-|x_{-1}|-|x_{-2}|\right) \nonumber\\
    t_{i+1} &\geq |x_{i-1}| + 2\sum_{j=0}^{i}|y_j| + p\left(|x_i|+|x_{i-1}|\right) - (2p+1-t_0)  \mylabel{eq:ti_bound}
  \end{align}
  For the sake of contradiction, we now assume that algorithm $\mathcal{A}$ satisfies $CR_{\mathcal{A}}=c<c^*$ for some constant $c$ where $c^*$ is the unique value satisfying $D_p(c^*)=0$ in the range $\left[2p+1+\sqrt{8p},\infty\right)$. By Lemma \myref{lem:discrim_behavior}, $D_p(c)<0$. Recall that for each $i\geq 0$, there exist points with associated competitive ratios arbitrarily close to $CR_1(i)$ and $CR_2(i)$. As such, it is necessary that for all $i\geq 0$, $CR_1(i)\leq c$ and $CR_2(i)\leq c$. For all $i\geq 0$, the following is required to satisfy $CR_2(j)\leq c$:
  \begin{align*}
    1 + \frac{t_i-|x_{i-2}| + p\left(|x_i|-|x_{i-2}|\right)}{|y_i|} &\leq c \\
    t_i-|x_{i-2}| + p\left(|x_i|-|x_{i-2}|\right) &\leq |y_i|(c-1) \\
    2\sum_{j=0}^{i-1}|y_j| + p\left(|x_i|+|x_{i-1}|\right) - (2p+1-t_0) &\leq |y_i|(c-1) \quad\text{(Eq. \myeqref{eq:ti_bound})}\\
    (c+1)\sum_{j=0}^{i-1}|y_j| + p\left(|x_i|+|x_{i-1}|\right) - (2p+1-t_0) &\leq (c-1)\sum_{j=0}^{i}|y_j|
  \end{align*}
  Since the above equation applies for all $i\geq 0$, it can be repeatedly applied to itself.
  Let $w=\frac{c+1}{c-1}$.
  \begin{align}
    \sum_{j=0}^{i}w^{i-j}\left(p\left(|x_j|+|x_{j-1}|\right) - (2p+1-t_0)\right) &\leq (c-1)\sum_{j=0}^{i}|y_j| \nonumber\\
    p\sum_{j=0}^{i}w^{i-j}|x_j| + p\sum_{j=0}^{i}w^{i-j}|x_{j-1}| - \left(\frac{w^{i+1}-1}{w-1}\right)(2p+1-t_0) &\leq (c-1)\sum_{j=0}^{i}|y_j| \mylabel{eq:bad_neg_term}
  \end{align}
  The term $-\left(\frac{w^{i+1}-1}{w-1}\right)(2p+1-t_0)$ may take on negative values. For reasons that will eventually become clear, we need to cancel this term and have at least $\frac12(c-1)(2p+1-t_0)$ left over. We will select an index $k$ such that early terms up to $j=k$ in the above sums accomplish this.
  \begin{align*}
    p\sum_{j=0}^{i}w^{i-j}|x_j| &+ p\sum_{j=0}^{i}w^{i-j}|x_{j-1}|
    = p\sum_{j=0}^{i}w^{i-j}|x_j| + p\sum_{j=0}^{i-1}w^{i-j}|x_j| + pw^i \\
    &= p\sum_{j=k+1}^{i}w^{i-j}|x_j| + p\sum_{j=k+1}^{i-1}w^{i-j}|x_j| + p\left(w^i + 2\sum_{j=0}^{k}w^{i-j}|x_j|\right) \\
    &= p\sum_{j=k+1}^{i}w^{i-j}|x_j| + p\sum_{j=k+1}^{i-1}w^{i-j}|x_j| + pw^i\left(1 + 2\sum_{j=0}^{k}\frac{|x_j|}{w^j}\right)
  \end{align*}
  Thus, we seek
  \begin{equation*}
    pw^i\left(1 + 2\sum_{j=0}^{k}\frac{|x_j|}{w^j}\right) - \left(\frac{w^{i+1}-1}{w-1}\right)(2p+1-t_0) \geq \frac12(c-1)(2p-t_0) .
  \end{equation*}
  Note that $1-t_0\leq 0$.
  It would suffice to simply have $\frac{|x_k|}{w^k} \geq \frac{w}{w-1}+c-1 = 2c$. By Lemma \myref{lem:min_x_growth}, $|x_k|\geq \frac{2}{c}\left(\frac{c}{c-2}\right)^k$. Hence, it suffices to select
  \begin{align*}
    \frac{2}{c}\left(\frac{c}{w(c-2)}\right)^k &\geq 2c \\
    \left(1+\frac{2}{c^2-c-2}\right)^k &\geq c^2 \\
    k &\geq 2\ln\left(c\right)\left(\ln\left(1+\frac{2}{c^2-c-2}\right)\right)^{-1}
  \end{align*}
  With such a $k$ selected, Equation \myeqref{eq:bad_neg_term} may be satisfied for all $i\geq k$ by the following requirement:
  \begin{equation*}
    p\sum_{j=k+1}^{i}w^{i-j}|x_j| + p\sum_{j=k+1}^{i-1}w^{i-j}|x_j| + \frac12(c-1)(2p+1-t_0) \leq (c-1)\sum_{j=0}^{i}|y_j| .
  \end{equation*}
  Let us denote the shorthand $z_i = \sum_{j=k+1}^{i}w^{i-j}|x_j|$. Note that $x_i = z_i - wz_{i-1}$.
  \begin{equation}
    \sum_{j=0}^{i}|y_j| \geq \frac{p}{c-1}\left(z_i + z_{i-1}\right) + \frac12(2p+1-t_0) . \mylabel{eq:sumy_bound}
  \end{equation}
  With this condition in place, we switch to considering how $CR_1(j)\leq c$ is satisfied. The following transformations are valid for $i\geq k+3$:
  \begin{align*}
    \frac{t_i}{|x_{i-2}|} &\leq c \\
    t_i - c|x_{i-2}| &\leq 0 \\
    2\sum_{j=0}^{i-1}|y_j| + p\left(|x_{i-1}|+|x_{i-2}|\right) - (2p+1-t_0) - (c-1)|x_{i-2}| &\leq 0 \quad\text{(Eq. \myeqref{eq:ti_bound})}\\
    \frac{2p}{c-1}\left(z_{i-1} + z_{i-2}\right) + p\left(|x_{i-1}|+|x_{i-2}|\right) - (c-1)|x_{i-2}| &\leq 0 \quad\text{(Eq. \myeqref{eq:sumy_bound})}
  \end{align*}
  \begin{align*}
    \frac{2p}{c-1}\left(z_{i-1} + z_{i-2}\right) + p\left(z_{i-1}-wz_{i-2}+z_{i-2}-wz_{i-3}\right) - (c-1)\left(z_{i-2}-wz_{i-3}\right) &\leq 0 \\
    2p\left(z_{i-1} + z_{i-2}\right) + p(c-1)\left(z_{i-1}-wz_{i-2}+z_{i-2}-wz_{i-3}\right) - (c-1)^2\left(z_{i-2}-wz_{i-3}\right) &\leq 0 \\
    p(c+1)z_{i-1} - (c-1)^2z_{i-2} + (c+1)\left(c-1-p\right)z_{i-3} &\leq 0
  \end{align*}
  \begin{equation*}
    z_{i-1} \leq \frac{(c-1)^2}{p(c+1)}z_{i-2} - \frac{c-1-p}{p}z_{i-3}
  \end{equation*}
  We apply Lemma \myref{lem:general_becknewman} to the sequence $\left(z_{i+k+3}\right)_{i=0}^{\infty}$ where $z_{i+k+3} = \sum_{j=k+1}^{i+k+3}\left(\frac{c+1}{c-1}\right)^{i-j}|x_j|$. We take $a=\frac{(c-1)^2}{p(c+1)}$, $b=\frac{c-1-p}{p}$ and note that both are positive. We verify that $a^2-4b<0$:
  \begin{equation*}
    \left(\frac{(c-1)^2}{p(c+1)}\right)^2 - 4\left(\frac{c-1-p}{p}\right) = \frac{1}{p^2(c+1)^2}D_p(c) < 0 .
  \end{equation*}
  Our assumption that $D_p(c) < 0$ has led us to a contradiction by Lemma \myref{lem:general_becknewman}. We conclude that no algorithm $\mathcal{A}$ can satisfy $CR_{\mathcal{A}}<c^*$ where $c^*$ is the unique value satisfying $D_p(c^*)=0$ in the range $\left[2p+1+\sqrt{8p},\infty\right)$. This proves Theorem \myref{thm:even_lowbound}.
\end{proof}
\end{appendixonly}

\begin{theorem} \mylabel{thm:even_cr_bounds}
    The unique solution $c^*$ to $D_p(c)=0$ within the interval $\left[2p+1+\sqrt{8p},\infty\right)$ is bounded as follows: $2p+3+\sqrt{8(p-1)} \leq c^* \leq 2p+3+\sqrt{8p}$.
\end{theorem}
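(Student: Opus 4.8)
The plan is to reduce the theorem to two sign evaluations by exploiting the sign structure of $D_p$ established in Lemma~\myref{lem:discrim_behavior}. That lemma shows that on the interval $\left[2p+1+\sqrt{8p},\infty\right)$ the function $D_p$ is negative for $c<c^*$ and positive for $c>c^*$, crossing zero exactly once at $c^*$. Hence, for any point $c_0$ lying in this interval, $D_p(c_0)\le 0$ forces $c_0\le c^*$, while $D_p(c_0)\ge 0$ forces $c^*\le c_0$. So the entire theorem follows once I (i) check that both proposed endpoints lie in $\left[2p+1+\sqrt{8p},\infty\right)$, and (ii) show $D_p\bigl(2p+3+\sqrt{8(p-1)}\bigr)\le 0$ and $D_p\bigl(2p+3+\sqrt{8p}\bigr)\ge 0$.

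For the membership check, $2p+3+\sqrt{8p}>2p+1+\sqrt{8p}$ is immediate. For the lower endpoint I need $2p+3+\sqrt{8(p-1)}\ge 2p+1+\sqrt{8p}$, i.e. $\sqrt{8p}-\sqrt{8(p-1)}\le 2$; rationalizing turns this into $\frac{8}{\sqrt{8p}+\sqrt{8(p-1)}}\le 2$, which holds for all $p\ge 2$. This guarantees the sign dichotomy of the lemma applies at both endpoints.

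For the two evaluations I would write $c=2p+3+s$, so that $c-1=2(p+1)+s$, $c+1=2(p+2)+s$, and $c-p-1=p+2+s$, then expand $(c-1)^4-4p(c+1)^2(c-p-1)$ while repeatedly substituting the known value $s^2=8p$ (for the upper endpoint) or $s^2=8(p-1)$ (for the lower endpoint) to eliminate all even powers of $s$. Each expression then collapses to a rational part plus a single multiple of $s$. The outcome of this computation should be
\[
  D_p\!\left(2p+3+\sqrt{8p}\right) = 32p^2+128p+16 + 32(p+1)\sqrt{8p},
\]
\[
  D_p\!\left(2p+3+\sqrt{8(p-1)}\right) = -(64p+112) - 32\sqrt{8(p-1)} .
\]
The first is manifestly positive and the second manifestly negative for every $p\ge 2$, so the dichotomy gives $c^*\le 2p+3+\sqrt{8p}$ and $c^*\ge 2p+3+\sqrt{8(p-1)}$ respectively, which is the claim.

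The only genuine obstacle is the bookkeeping in the two fourth-degree expansions: collecting the $s^0$ and $s^1$ coefficients correctly involves many intermediate terms and is error-prone, even though it is entirely mechanical. The pleasant feature that makes the result clean — and that is not obvious before doing the algebra — is that the $p^4$ and $p^3$ terms, and almost all of the $s$-terms, cancel completely, leaving a surviving expression of unambiguous sign that needs no further estimation; this is precisely why the endpoints were chosen in the form $2p+3\pm(\cdots)$. As a guard against arithmetic slips, I would cross-check both closed forms numerically at $p=2$ (directly evaluating $D_2(11)$ and $D_2(7+\sqrt 8)$) and confirm the resulting interval $\bigl[\,7+\sqrt8,\ 11\,\bigr]$ brackets the value $c^*\approx 10.27303$ recorded in Table~\myref{tbl:crs}.
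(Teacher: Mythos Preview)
Your approach is essentially identical to the paper's: invoke Lemma~\myref{lem:discrim_behavior} to reduce the claim to the two sign evaluations $D_p(2p+3+\sqrt{8(p-1)})\le 0$ and $D_p(2p+3+\sqrt{8p})\ge 0$, then compute the closed forms. Your expressions agree with the paper's (the paper writes them as $16\bigl(2p^2+8p+1+4(p+1)\sqrt{2p}\bigr)$ and $-16\bigl(4p+7+4\sqrt{2(p-1)}\bigr)$, which match yours after using $\sqrt{8p}=2\sqrt{2p}$); the only differences are that the paper outsources the expansion to Mathematica while you do it by hand, and that you explicitly verify the membership $2p+3+\sqrt{8(p-1)}\ge 2p+1+\sqrt{8p}$, a detail the paper silently assumes.
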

\begin{appendixonly}
\begin{proof}
    By Lemma \myref{lem:discrim_behavior}, it would suffice to show that $D_p\left(2p+3+\sqrt{8(p-1)}\right)\leq 0$ while $D_p\left(2p+3+\sqrt{8p}\right)\geq0$ within the interval in question. 
    It can be shown\footnote{We derive these results using Mathematica: \url{https://anonymous.4open.science/r/multimodal_linear_search-5DF4/}} that 
    $$D_p\left(2p+3+\sqrt{8p}\right) = 16\left(1+4\sqrt{2p}+8p+4p\sqrt{2p}+2p^2\right),$$ 
    which is clearly positive for $p\geq1$. Meanwhile, it can also be shown that 
    $$D_p\left(2p+3+\sqrt{8(p-1)}\right) = -16\left(7+4\sqrt{2(p-1)}+4p\right),$$
    which is clearly negative for $p\geq 1$. 
    This proves Theorem \myref{thm:even_cr_bounds}.
\end{proof}
\end{appendixonly}

\section{A Practical Algorithm}\mylabel{sec:approx}

The optimal algorithms described thus far use \textsc{ThoroughSearch}, which requires constantly switching directions in a way that is infinitely precise. In this section, we show that a family of modified algorithms produces competitive ratios arbitrarily close to the optimal ones without the need for infinitesimal movements or infinitely many switches in direction. The algorithms in question replace all instances of \textsc{ThoroughSearch} with another procedure that we call \textsc{CompliantThoroughSearch}. This new procedure has five parameters. Three of the parameters are identical to those of \textsc{ThoroughSearch}, describing the interval that needs to be searched and the number of modes. The fourth parameter $c$ represents the competitive ratio achieved by the algorithm that is subject to modification. The final parameter $\varepsilon>0$ is chosen by the user and represents the amount by which \textsc{CompliantThoroughSearch} is permitted to deviate from competitive ratio $c$ in its performance. Unlike \textsc{DiscreteThoroughSearch}, \textsc{CompliantThoroughSearch} partitions the searched interval into cells of unequal sizes (exponentially increasing). This is done strategically as large cells closer to the origin disproportionately affect the resulting competitive ratio. It should be noted that the way the interval is partitioned into cells does not affect the overall completion time of the procedure.

\begin{algorithm}[H]
\caption{\mylabel{alg}(Compliant Thorough Search)}
    \begin{algorithmic}[1]
        \Procedure{CompliantThoroughSearch}{$x_{init}$, $\delta$, $p$, $c$, $\varepsilon$}
            \Statex $x_{init}$: current position
            \Statex $\delta$: positive distance
            \Statex $c$: competitive ratio
            \Statex $\varepsilon$: difference in competitive ratios
            \Statex $p$: number of search modes
            \State $p' \gets p$ \textbf{if} $p$ \textbf{is odd} \textbf{else} $p + 1$
            \State $x \gets 0$ \Comment{current displacement from $x_{init}$}
            \State $s \gets \frac{\varepsilon |x_{init}|}{p' - 1}$ \Comment{cell width}
            \State \textbf{while} $x < \delta$ \textbf{do} \Indent
            \State $\Delta x \gets \min(s, \delta - x)$
            \State \textbf{call} \textsc{CellSearch}($x_{init} + x \cdot \text{sign}(x_{init})$, $\Delta x$, $p$)
            \State \textbf{if} $p$ \textbf{is even} \textbf{then} travel to $x + \Delta x$ \textbf{endif}
            \State $x \gets x + \Delta x$
            \State $s \gets s \cdot \frac{c + \varepsilon - 1}{p' - 1}$\EndIndent
            \State \textbf{endwhile}
        \EndProcedure
    \end{algorithmic}
\end{algorithm}

\begin{lemma} \mylabel{lem:replacing_thorough}
  Suppose that an algorithm $\mathcal{A}$ makes a call to \textsc{ThoroughSearch} with parameters $x_{init}$ and $\delta$ in order to explore the half-open interval composed of the set of coordinates $\left\{x : |x_{init}|<x\cdot\text{sign}\left(x_{init}\right)\leq |x_{init}|+\delta\right\}$. Suppose also that all coordinates $x$ within this interval receive $CR_{\mathcal{A}}(x)\leq c$ for some $c$. Given any $\varepsilon>0$, we can construct a modified algorithm $\mathcal{A}'$ in which the call to \textsc{ThoroughSearch} is replaced with an analogous call to \textsc{CompliantThoroughSearch} which achieves $CR_{\mathcal{A}'}(x)\leq c+\varepsilon$ for the same coordinates $x$ and which makes $n$ calls to cell search where
  \begin{equation*}
    n = \ceil*{\left(\ln\left(\frac{c+\varepsilon-1}{p'-1}\right)\right)^{-1}\cdot\ln\left(1+\frac{(c+\varepsilon-p')\delta}{\varepsilon |x_{init}|}\right)}
  \end{equation*}
  and $p'=p$ for odd $p$ while $p'=p+1$ for even $p$.
\end{lemma}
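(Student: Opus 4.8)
The plan is to fix a single call to \textsc{ThoroughSearch} and track, coordinate by coordinate, how much later \textsc{CompliantThoroughSearch} explores each point. Assume without loss of generality that $x_{init}>0$, write $X=|x_{init}|$, and parametrize the interval by the displacement $y\in(0,\delta]$, so the coordinate is $X+y$ and $p'$ is the effective slowdown ($p'=p$ for odd $p$, $p'=p+1$ for even). Let $t_{start}$ be the time the call begins. Since \textsc{ThoroughSearch} moves at speed $1/p'$ while searching all modes simultaneously, it explores displacement $y$ at time $t_{start}+p'y$, so $CR_{\mathcal{A}}(X+y)=\frac{t_{start}+p'y}{X+y}$ for freshly explored points (points already explored before $t_{start}$ are untouched by the replacement and so still satisfy $CR_{\mathcal{A}'}\le c$; we may ignore them). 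Letting $y\to 0^+$ in the hypothesis $CR_{\mathcal{A}}\le c$ yields the single inequality $t_{start}\le cX$, which will be the only place the hypothesis is used.

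Next I would analyze \textsc{CompliantThoroughSearch} directly. Its cells have widths $s_k=s_1\lambda^{k-1}$ with $s_1=\frac{\varepsilon X}{p'-1}$ and growth factor $\lambda=\frac{c+\varepsilon-1}{p'-1}$; writing $y_k=\sum_{j\le k}s_j$, a short computation shows each cell takes time $p's_k$ (the $p$ sweeps of \textsc{CellSearch}, plus for even $p$ the single repositioning traversal), so cell $k$ begins at relative time $p'y_{k-1}$ and the whole procedure still finishes at relative time $p'\delta$ --- the partition does not change the total time. The crux is a within-cell timing analysis of \textsc{CellSearch}: for a point at offset $\xi\in(0,s_k]$ inside cell $k$ I would determine the last mode-sweep to cover it. For odd $p$ this is the final near-to-far sweep, giving relative exploration time $p'y_{k-1}+(p'-1)s_k+\xi$; for even $p$ it is the final far-to-near sweep, giving $p'y_{k-1}+(p'-1)s_k-\xi$. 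In both parities the resulting ratio $\frac{t_{start}+\tau(\xi)}{X+y_{k-1}+\xi}$ is decreasing in $\xi$, so its supremum over the cell is attained as $\xi\to 0^+$ and equals $\frac{t_{start}+p'y_{k-1}+(p'-1)s_k}{X+y_{k-1}}$; equivalently, the maximal lag behind \textsc{ThoroughSearch} in cell $k$ is exactly $(p'-1)s_k$, occurring at the cell's origin-facing boundary.

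It then remains to verify this supremum is at most $c+\varepsilon$ for every $k$, and here the design of the cell sizes pays off. Since $\lambda-1=\frac{c+\varepsilon-p'}{p'-1}$, the geometric recursion telescopes to the exact identity $(p'-1)s_k=\varepsilon X+(c+\varepsilon-p')y_{k-1}$. Substituting this into the target inequality $t_{start}+p'y_{k-1}+(p'-1)s_k\le (c+\varepsilon)(X+y_{k-1})$ collapses it to precisely $t_{start}\le cX$, which we already have. Thus $CR_{\mathcal{A}'}(X+y)\le c+\varepsilon$ for all $y$, the final truncated cell only helping since a shorter cell has smaller lag. Finally, the cell count $n$ is the least integer with $y_n\ge\delta$; solving $s_1\frac{\lambda^n-1}{\lambda-1}\ge\delta$ and simplifying with the values of $s_1$ and $\lambda$ gives $\lambda^n\ge 1+\frac{(c+\varepsilon-p')\delta}{\varepsilon|x_{init}|}$, which rearranges to the stated ceiling expression.

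I expect the main obstacle to be the parity-split within-cell analysis of \textsc{CellSearch} --- correctly identifying which sweep explores a given offset last and checking that both parities produce the same worst-case lag $(p'-1)s_k$ at the near boundary. Once that is in hand, the reduction of the key inequality to $t_{start}\le cX$ through the telescoping identity is the elegant payoff, and the cell count is a routine geometric-series estimate.
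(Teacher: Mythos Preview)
Your proposal is correct and follows essentially the same approach as the paper's proof: both identify the near boundary of each cell as the worst case (via the same parity-split within-cell analysis), both reduce the key inequality to $t_{start}\le c|x_{init}|$, and both obtain the cell count from the same geometric series. Your telescoping identity $(p'-1)s_k=\varepsilon X+(c+\varepsilon-p')y_{k-1}$ is exactly the algebraic content of the paper's cancellation step, just packaged more explicitly.
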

\begin{appendixonly}
\begin{proof}
    Let $n$ denote the number of cells used by \textsc{CompliantThoroughSearch}. We will label the cells with indices $0$ through $n-1$ based on the order that they are searched. Let $s_i$ denote the length of cell $i$ and let $\hat{s}_i = \sum_{j=0}^{i}s_j$. For convenience, let $b=\frac{c+\varepsilon-1}{p'-1}$. For all indices $i$ less than $n-1$, $s_i=f(i)$ where $f(i) = \frac{\varepsilon |x_{init}|}{p'-1}b^i$. We also define $F(i) = \sum_{j=0}^{i}f(j) = \frac{\varepsilon |x_{init}|}{p'-1}\cdot\frac{b^{i+1}-1}{b-1}$. The last cell with index $n-1$ has length $s_{n-1}=\delta-F(n-2)\leq f(n-1)$ so that the lengths of all the cells sum to $\delta$. As such, $n$ must be the smallest natural number satisfying $F(n-1)\geq\delta$.
    \begin{align*}
        \frac{\varepsilon |x_{init}|}{p'-1}\cdot\frac{b^n-1}{b-1} &\geq \delta \\  
        n &\geq \log_b\left(1+\frac{(p'-1)(b-1)\delta}{\varepsilon |x_{init}|}\right) \\ 
        n &= \ceil*{\left(\ln\left(\frac{c+\varepsilon-1}{p'-1}\right)\right)^{-1}\cdot\ln\left(1+\frac{(c+\varepsilon-p')\delta}{\varepsilon |x_{init}|}\right)}
    \end{align*}
    We now verify that none of the points explored produce a competitive ratio exceeding $c+\varepsilon$. An arbitrary cell $i$ contains all points of the form $|x_{init}|+\hat{s}_{i-1}<x\cdot\text{sign}\left(x_{init}\right)\leq |x_{init}|+\hat{s}_{i}$.
    Among these points, the largest $CR_{\mathcal{A}}(x)$ is produced by $|x|$ arbitrarily close to $|x_{init}|+\hat{s}_{i-1}$: those closest to the origin. For even $p$, this is obvious as these points are the last to be fully explored in the cell. For odd $p$, the searcher finishes exploring the cell in the last mode while moving away from the origin with speed $1$. Traveling with speed $1$ means that both the numerator and denominator of the competitive ratio increase by equal amounts, bringing its value closer to the minimum possible value of $1$ as the searcher progresses. Thus, the points closest to the origin still produce the largest ratio despite being searched first. Knowing this, it is sufficient to check the point $y_i = x_{init}+\hat{s}_{i-1}\text{sign}\left(x_{init}\right)$. We know that $T_{\mathcal{A}}(y_i) = T_{\mathcal{A}}(x_{init}) + p'\hat{s}_{i-1} + (p'-1)s_i$. Furthermore, we know that $T_{\mathcal{A}}(x_{init})\leq c|x_{init}|$.
    \begin{align*}
        CR_{\mathcal{A}}(y_i) &\leq \frac{c|x_{init}| + p'\hat{s}_{i-1} + (p'-1)s_i}{|x_{init}|+\hat{s}_{i-1}} \\
        &= c + \varepsilon + \frac{-\varepsilon |x_{init}| + (p'-c-\varepsilon)\hat{s}_{i-1} + (p'-1)s_i}{|x_{init}|+\hat{s}_{i-1}} \\
        &\leq c + \varepsilon + \frac{-\varepsilon |x_{init}| + \frac{p'-c-\varepsilon}{(p'-1)(b-1)}\cdot\varepsilon |x_{init}|(b^{i}-1) + \varepsilon |x_{init}|b^i}{|x_{init}|+\hat{s}_{i-1}} \\
        &= c + \varepsilon + \frac{-\varepsilon |x_{init}| - \varepsilon |x_{init}|(b^{i}-1) + \varepsilon |x_{init}|b^i}{|x_{init}|+\hat{s}_{i-1}} = c + \varepsilon .
    \end{align*}
    This completes the proof of Lemma \myref{lem:replacing_thorough}.
\end{proof}
\end{appendixonly}

\begin{theorem} \mylabel{thm:odd_compliant}
    For any odd $p\geq 3$, let $\mathcal{A}$ be an instance of Algorithm \myref{alg:odd} with the optimal parameter $a = 1+\sqrt{\frac{2}{p+1}}$ so that $CR_{\mathcal{A}} = 2p+3+\sqrt{8(p+1)}$. For any $\varepsilon>0$, a modified algorithm $\mathcal{B}$ that replaces all calls to \textsc{ThoroughSearch} with analogous calls to \textsc{CompliantThoroughSearch} with the additional parameters $c=CR_{\mathcal{A}}$ and $\varepsilon$ achieves competitive ratio $CR_{\mathcal{B}} = CR_{\mathcal{A}}+\varepsilon$ while making $O\left(\log\left(\frac{p}{\varepsilon}\right)\right)$ calls to cell search per round.
\end{theorem}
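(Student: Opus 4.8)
The plan is to reduce everything to a per-round application of Lemma~\ref{lem:replacing_thorough} together with a direct estimate of the cell count it produces. First I would record the structural facts that make the reduction legitimate. Each round of Algorithm~\ref{alg:odd} issues exactly one call to \textsc{ThoroughSearch}, and replacing it by \textsc{CompliantThoroughSearch} leaves the running time of that round unchanged: searching a length-$\delta$ interval in all $p$ modes costs total travel $p\delta$ no matter how the interval is cut into cells (the remark preceding Algorithm~\ref{alg}). Hence the rounds of $\mathcal{B}$ are timed exactly as in $\mathcal{A}$, and in particular the searcher still begins round $i$'s search at $x_{init}^{(i)}=(-1)^i a^{i-1}$ at time $T_{i-1}+a^{i-1}$. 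Since the near-boundary limit $1+T_{i-1}/a^{i-1}=\lim_{x\to0^+}CR_{\mathcal{A}}(d_1(i,x))$ computed in the proof of Theorem~\ref{thm:odd_upbound} is at most $c$, this start time is at most $c\,a^{i-1}=c\,|x_{init}^{(i)}|$, which is precisely the hypothesis $T(x_{init})\le c|x_{init}|$ needed to invoke Lemma~\ref{lem:replacing_thorough} in round $i$.

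With the hypothesis verified in every round, I would apply Lemma~\ref{lem:replacing_thorough} round by round with $c=CR_{\mathcal{A}}=2p+3+\sqrt{8(p+1)}$. All exploration of new coordinates in Algorithm~\ref{alg:odd} occurs inside some round's \textsc{ThoroughSearch} call (the ``go to $x$'' and ``return to origin'' legs only retread already-handled ground), and each such call satisfied $CR_{\mathcal{A}}(x)\le c$ by Theorem~\ref{thm:odd_upbound}; the lemma therefore upgrades this to $CR_{\mathcal{B}}(x)\le c+\varepsilon$ for all $x$, giving $CR_{\mathcal{B}}\le c+\varepsilon$. For the reverse inequality I would note that the lemma's bound is tight: the computation in its proof shows that, for points approaching the origin-facing end of a non-final cell, the ratio tends to $c+\varepsilon$ as the round index grows, so the supremum is exactly $c+\varepsilon$. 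This yields the stated equality $CR_{\mathcal{B}}=CR_{\mathcal{A}}+\varepsilon$.

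For the cell count I would substitute the round-$i$ parameters $|x_{init}|=a^{i-1}$ and $\delta=a^{i+1}-a^{i-1}=a^{i-1}(a^2-1)$ into the count from Lemma~\ref{lem:replacing_thorough}. The key simplification is that $\delta/|x_{init}|=a^2-1$ does not depend on $i$, so the number of \textsc{CellSearch} calls is identical in every round, namely (with $p'=p$)
\[
  n=\ceil*{\left(\ln\frac{c+\varepsilon-1}{p-1}\right)^{-1}\ln\!\left(1+\frac{(c+\varepsilon-p)(a^2-1)}{\varepsilon}\right)}.
\]
It then remains to bound the two factors. Because $c>2p$, we have $\frac{c+\varepsilon-1}{p-1}>\frac{2p-1}{p-1}>2$, so the first factor is at most $1/\ln2=O(1)$; and since $c+\varepsilon-p=O(p+\varepsilon)$ while $a^2-1=O(1)$ for $p\ge3$, the argument of the second logarithm is $1+O\!\left(\tfrac{p+\varepsilon}{\varepsilon}\right)=O(p/\varepsilon)$ in the regime of small $\varepsilon$, so the second factor is $O(\log(p/\varepsilon))$. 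Multiplying gives $n=O(\log(p/\varepsilon))$ calls to cell search per round.

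The step I expect to be the main obstacle is not the count, which is a substitution, but the simultaneous chaining of Lemma~\ref{lem:replacing_thorough} across all rounds: the lemma is stated for a single replacement, so one must argue that performing every replacement at once preserves each round's start time (hence the per-round hypothesis $T(x_{init})\le c|x_{init}|$) and that the modifications do not interfere across rounds. The time-invariance of cell partitioning is exactly what makes this clean, and I would be careful to invoke it explicitly rather than treat the rounds as independent without justification.
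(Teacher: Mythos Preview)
Your proposal is correct and follows the same approach as the paper: apply Lemma~\ref{lem:replacing_thorough} to the single \textsc{ThoroughSearch} call in each round, then bound the resulting cell count by observing that $\delta/|x_{init}|=a^2-1$ is independent of the round and estimating the two logarithmic factors. Your treatment is in fact more complete than the paper's own proof, which simply invokes the lemma and proceeds to the arithmetic; you additionally (i) make explicit that the per-cell partitioning leaves the round duration unchanged, so the start-time hypothesis $T(x_{init})\le c|x_{init}|$ survives in every round of the modified algorithm, and (ii) argue for the equality $CR_{\mathcal{B}}=c+\varepsilon$ rather than merely the upper bound. Both are legitimate concerns that the paper leaves implicit, and your handling of them is sound.
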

\begin{appendixonly}
\begin{proof}
    Algorithm \myref{alg:odd} makes one call to \textsc{ThoroughSearch} every round $i$ with $x_{init} = (-1)^ia^{i-1}$ and $\delta = a^{i-1}(a^2-1)$. 
    By lemma \myref{lem:replacing_thorough}, replacing this call with \textsc{Compliant\allowbreak Thorough\allowbreak Search} and adding the parameters $c=CR_{\mathcal{A}}$ and $\varepsilon$ ensures that the same area is searched in the same amount of time with the new competitive ratio $CR_{\mathcal{A}}+\varepsilon$ and with $n$ calls to cell search where
    \begin{align*}
        n &= \ceil*{\left(\ln\left(\frac{c+\varepsilon-1}{p-1}\right)\right)^{-1}\cdot\ln\left(1+\frac{(c+\varepsilon-p)\delta}{\varepsilon |x_{init}|}\right)} \\
        &= \ceil*{
            \begin{aligned}
                &\left(\ln\left(\frac{2p+2+\sqrt{8(p+1)}+\varepsilon}{p-1}\right)\right)^{-1} \\
                &\quad \cdot \ln\left(1+\frac{(p+3+\sqrt{8(p+1)}+\varepsilon)(a^2-1)}{\varepsilon}\right)
            \end{aligned}
        } \\
        &\leq \ceil*{\left(\ln(2)\right)^{-1}\cdot\ln\left(1+\frac{(p+p+3(p+1)+\varepsilon)\left(\left(1+\sqrt{\frac{2}{p+1}}\right)^2-1\right)}{\varepsilon}\right)} \\
        &\leq \ceil*{\left(\ln(2)\right)^{-1}\cdot\ln\left(1+\frac{(p+p+3p+p+\varepsilon)\left((2)^2-1\right)}{\varepsilon}\right)} \\
        &= \ceil*{\left(\ln(2)\right)^{-1}\cdot\ln\left(4+\frac{18p}{\varepsilon}\right)}
        = O\left(\log\left(\frac{p}{\varepsilon}\right)\right) .
    \end{align*}
    This concludes the proof of Theorem \myref{thm:odd_compliant}.
\end{proof}
\end{appendixonly}

\begin{theorem} \mylabel{thm:even_compliant}
    For any even $p$, let $\mathcal{A}$ be an instance of Algorithm \myref{alg:even} with parameters $a$ and $r$ such that it yields the optimal competitive ratio $CR_{\mathcal{A}}$ according to Theorem \myref{thm:even_upbound}. For any $\varepsilon>0$, a modified algorithm $\mathcal{B}$ that replaces all calls to \textsc{ThoroughSearch} with analogous calls to \textsc{CompliantThoroughSearch} with the additional parameters $c=CR_{\mathcal{A}}$ and $\varepsilon$ achieves competitive ratio $CR_{\mathcal{B}} = CR_{\mathcal{A}}+\varepsilon$ while making $O\left(\log\left(\frac{p}{\varepsilon}\right)\right)$ calls to cell search per round.
\end{theorem}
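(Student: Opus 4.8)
The plan is to mirror the proof of Theorem~\myref{thm:odd_compliant} for the odd case; the only genuinely new difficulty is that the parameters $a$ and $r$ of Algorithm~\myref{alg:even} are defined only implicitly in terms of $c^*=CR_{\mathcal{A}}$, rather than by the explicit formula $a=1+\sqrt{2/(p+1)}$ available in the odd case. First I would read off the single call to \textsc{ThoroughSearch} made each round by Algorithm~\myref{alg:even}: it uses $x_{init}=(-1)^ia^{i-1}$ (so $|x_{init}|=a^{i-1}$) and explores the phase-1 interval of length $\delta=|x_{mid}-x_{start}|=ra^{i-1}(a^2-1)$. Applying Lemma~\myref{lem:replacing_thorough} with $c=CR_{\mathcal{A}}$ immediately gives $CR_{\mathcal{B}}=CR_{\mathcal{A}}+\varepsilon$ and an exact count of the cell-search calls. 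Since $p$ is even, $p'=p+1$, and since $\delta/|x_{init}|=r(a^2-1)$ is independent of the round $i$, the count
\begin{equation*}
  n=\ceil*{\left(\ln\left(\tfrac{c+\varepsilon-1}{p}\right)\right)^{-1}\cdot\ln\left(1+\tfrac{(c+\varepsilon-p-1)\,r(a^2-1)}{\varepsilon}\right)}
\end{equation*}
is the same for every round. It then remains only to show $n=O\!\left(\log\frac{p}{\varepsilon}\right)$.

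The key step, and the place where the even case departs from the odd one, is to control $r(a^2-1)$ without closed forms for $a$ and $r$. I would exploit that $c^*$ is a root of $D_p$. In the proof of Theorem~\myref{thm:even_upbound}, $a$ arises as the repeated root of $p(c+1)a^2-(c-1)^2a+(c+1)(c-1-p)=0$, whose discriminant is exactly $D_p(c)$; hence when $D_p(c^*)=0$ this quadratic has a double root and Vieta's product-of-roots identity gives the clean relation $a^2=\frac{c^*-1-p}{p}$, equivalently $a^2-1=\frac{c^*-(2p+1)}{p}$. (Equivalently, one checks directly that $\big(\frac{(c^*-1)^2}{2p(c^*+1)}\big)^2=\frac{c^*-1-p}{p}$ simplifies to $D_p(c^*)=0$.) Because the algorithm requires $r\in[0,1]$, this yields the bound $r(a^2-1)\le a^2-1=\frac{c^*-(2p+1)}{p}$, which is $O(1/\sqrt{p})$.

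With this in hand the remaining estimates are routine and parallel the odd case, using the explicit two-sided bounds $2p+1+\sqrt{8p}\le c^*\le 2p+3+\sqrt{8p}$ from Theorem~\myref{thm:even_cr_bounds}. The denominator factor is a constant, since $\frac{c^*+\varepsilon-1}{p}\ge\frac{2p+\sqrt{8p}}{p}>2$ gives $\big(\ln(\frac{c^*+\varepsilon-1}{p})\big)^{-1}\le\frac{1}{\ln 2}$; and the logarithm's argument satisfies
\begin{equation*}
  1+\frac{(c^*+\varepsilon-p-1)(c^*-2p-1)}{p\,\varepsilon}\le 1+\frac{(p+2+\sqrt{8p}+\varepsilon)(2+\sqrt{8p})}{p\,\varepsilon}=1+O\!\left(\frac{\sqrt{p}}{\varepsilon}\right),
\end{equation*}
whose logarithm is $O\!\left(\log\frac{\sqrt{p}}{\varepsilon}\right)=O\!\left(\log\frac{p}{\varepsilon}\right)$. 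Multiplying the two factors yields $n=O\!\left(\log\frac{p}{\varepsilon}\right)$ per round, completing the proof.

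I expect the main obstacle to be precisely the middle step: recognizing that the implicit definition of $a$ can be converted, via the double-root condition $D_p(c^*)=0$, into the algebraic identity $a^2=\frac{c^*-1-p}{p}$. Without it the product $r(a^2-1)$ is awkward to bound; with it, together with the a-priori constraint $r\le 1$ and the explicit estimate on $c^*$, the size bound on $n$ reduces to the same computation as in the odd case.
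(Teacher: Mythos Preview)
Your overall approach is correct and matches the paper's: apply Lemma~\myref{lem:replacing_thorough} to the single \textsc{ThoroughSearch} call per round, observe that $\delta/|x_{init}|=r(a^2-1)$ is round-independent, and then bound the resulting expression for $n$.

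The only substantive difference is in how you control $r(a^2-1)$. You treat this as the hard step and invoke the double-root condition $D_p(c^*)=0$ together with Vieta's product-of-roots identity to extract $a^2=(c^*-1-p)/p$, hence $a^2-1=(c^*-2p-1)/p=O(1/\sqrt{p})$. This is correct and rather elegant. The paper takes a more pedestrian route: it simply plugs the bounds on $c^*$ from Theorem~\myref{thm:even_cr_bounds} into the explicit formula $a=\frac{(c^*-1)^2}{2p(c^*+1)}$ (already stated in Theorem~\myref{thm:even_upbound}) and obtains the crude constant bound $a\le 9$, whence $r(a^2-1)\le 80$. Your estimate is sharper, but both lead to the same $n=O(\log(p/\varepsilon))$, and the paper's version is arguably simpler since it avoids the algebraic detour through the discriminant. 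So the step you flagged as the ``main obstacle'' turns out not to require the double-root identity at all. One minor inaccuracy: the lower bound $c^*\ge 2p+1+\sqrt{8p}$ you quote is not the content of Theorem~\myref{thm:even_cr_bounds} (which gives the tighter $c^*\ge 2p+3+\sqrt{8(p-1)}$); it comes instead from the defining interval of $c^*$, though it is of course still valid for your purposes.
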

\begin{appendixonly}
\begin{proof}
    Algorithm \myref{alg:even} makes one call to \textsc{ThoroughSearch} every round $i$ with $x_{init} = (-1)^ia^{i-1}$ and $\delta = a^{i-1}r(a^2-1)$. By lemma \myref{lem:replacing_thorough}, replacing this call with \textsc{Compliant\allowbreak Thorough\allowbreak Search} and adding the parameters $c=CR_{\mathcal{A}}$ and $\varepsilon$ ensures that the same area is searched in the same amount of time with the new competitive ratio $CR_{\mathcal{A}}+\varepsilon$ and with $n$ calls to cell search where
    \begin{align*}
        n &= \ceil*{\left(\ln\left(\frac{c+\varepsilon-1}{p}\right)\right)^{-1}\cdot\ln\left(1+\frac{(c+\varepsilon-p-1)\delta}{\varepsilon |x_{init}|}\right)}\\
        &= \ceil*{\left(\ln\left(\frac{c+\varepsilon-1}{p}\right)\right)^{-1}\cdot\ln\left(1+\frac{r(c+\varepsilon-p-1)(a^2-1)}{\varepsilon}\right)}
    \end{align*}
    Theorem \myref{thm:even_upbound} specifies that $a = \frac{(c-1)^2}{2p(c+1)}$. From Theorem \myref{thm:even_cr_bounds}, we know that $2p+3+\sqrt{8(p-1)}\leq c\leq 2p+3+\sqrt{8p}$. This can be used to derive an upper bound for $a$.
    \begin{equation*}
        a \leq \frac{\left(2p+2+\sqrt{8p}\right)^2}{2p\left(2p+4+\sqrt{8(p-1)}\right)}
        \leq \frac{\left(2p+p+3p\right)^2}{2p\left(2p\right)} = 9
    \end{equation*}
    These bounds, as well as the fact that $r\leq1$, yield the following:
    \begin{align*}
        n &\leq \ceil*{\left(\ln\left(\frac{2p+2+\sqrt{8(p-1)}+\varepsilon}{p}\right)\right)^{-1}\cdot\ln\left(1+\frac{(p+2+\sqrt{8p}+\varepsilon)(80)}{\varepsilon}\right)} \\
        &\leq \ceil*{\left(\ln(2)\right)^{-1}\cdot\ln\left(81+\frac{400p}{\varepsilon}\right)}
        = O\left(\log\left(\frac{p}{\varepsilon}\right)\right) .
    \end{align*}
    This concludes the proof of Theorem \myref{thm:even_compliant}.
\end{proof}
\end{appendixonly}

\section{Conclusion}
\mylabel{sec:conclusion}

In this paper, we introduced a new type of linear search, multimodal linear search, with $p$ search modes, a generalization of the well-known linear search with $p=1$.
We proposed algorithms (one for the case of odd $p$ and one for the case of even $p$) which we show achieve the optimal competitive ratio for multimodal search for an unknown target by a single searcher.
As anticipated, our results generalize the well-known competitive ratio of $9$ for the standard linear search problem (i.e., $p=1$). 

We also proposed modified algorithms (with competitive ratio $c + \varepsilon$ where $c$ is the optimal competitive ratio) that involve the searcher moving finite distances and switching directions a finite number of times over any finite interval.
This suggests that it may be interesting to study the multimodal search problem for the case where there is a cost associated with switching directions and/or search modes.
In this variant, Algorithms~\ref{alg:odd} and~\ref{alg:even} would not only be impractical, but have unbounded competitive ratio.
Turn cost has been studied before~\cite{gal_search_games}[Section 8.4, p 132],~\cite{turn_cost,turn_cost_gal,lopezsearching} but, to the best of our knowledge, has never been studied in a multimodal context.

Many other interesting research questions remain open, including randomized algorithms, group search (with multiple cooperating searchers), fault tolerance of the searchers, as well as search on domains more general than the real line (e.g., a star of infinite rays).

\bibliographystyle{abbrv}
\bibliography{main}





\end{document}